\newcommand{\diag}{\mbox{diag}}
\newcommand{\col}{\mbox{col}}
\newtheorem{rem}{Remark}
\newtheorem{Corollary}{Corollary}
\newtheorem{Definition}{Definition}
\newtheorem{prob}{Problem}
\newenvironment{proof}{\noindent{\em Proof:\/}}{\hfill $\Box$\par}
\newtheorem{thm}{Theorem}
\newtheorem{lem}{Lemma}
\newtheorem{ass}{Assumption}
\newcommand{\row}{\textnormal{row}}
\newcommand{\myr}{}
\begin{document}
%\begin{sloppypar}
%\begin{frontmatter}
\onecolumn
{\Large
This paper was originally published in

Wang, S., Pan, Y. J., and Guay, M. (2024). Distributed state estimation for linear time-invariant systems with aperiodic sampled measurement. IEEE Transactions on Control of Network Systems, DOI: 10.1109/TCNS.2024.3355041.

We have spotted a typo in equation (11), adding missing terms "$\arctan(r)$" and "$\operatorname{arctanh}(r)$"
}

 \twocolumn
\title{Distributed State Estimation for Linear Time-invariant Systems with Aperiodic Sampled Measurement}
\author{Shimin~Wang, Ya-Jun~Pan and ~Martin~Guay
\thanks{This work was supported by MITACS and the Natural Sciences and Engineering Research Council (NSERC), Canada. Shimin~Wang and Martin~Guay are with the Department of Chemical Engineering, Queen's University, Kingston,  ON K7L 3N6, Canada (martin.guay@queensu.ca, bellewsm@mit.edu), and Ya-Jun~Pan is with the Department of Mechanical Engineering, Dalhousie University, Halifax, NS, B3H 4R2, Canada (yajun.pan@dal.ca)\\(Corresponding Author: Martin~Guay)}
}
\maketitle
\begin{abstract}
This paper deals with the state estimation of linear time-invariant systems using distributed observers with local sampled-data measurement and aperiodic communication. Each observer agent receives partial information of the system to be observed but does not satisfy the observability condition. Consequently, distributed observers are designed to exponentially estimate the state of the system to be observed by time-varying sampling and asynchronous communication. Additionally, explicit upper bounds on allowable sampling periods for convergent estimation errors are given. Finally, a numerical example is provided to demonstrate the validity of the theoretical results. 
\end{abstract}
\begin{IEEEkeywords} Sampled-data control, Distributed observers, Jointly observable systems, Linear time-invariant systems.
\end{IEEEkeywords}

\section{Introduction}

For a given linear time-invariant (LTI) system, the distributed state estimation problem intends
to asymptotically estimate the system's state by combining the partial measurements collected from a group of dynamic agents operating over a network \citep*{mitra2018distributed,wu2021design}. The LTI system to be observed takes the following form:
\begin{align}
\dot{x}(t) &= Ax(t),\label{leader}
 \end{align}
where $x(t)\in \mathds{R}^{n}$ is the vector of state variables and $A\in \mathds{R}^{n\times n}$ is the system matrix. Each agent has access to only partial state information of the system in \eqref{leader} and receives local partial measurements of the form:
\begin{align}
 y_i(t)&=C_ix(t),\;i\in \mathcal{V},\label{leaderooutput}
 \end{align}
where $\mathcal{V}$ is the set of all nodes, $y_i(t)\in \mathds{R}^{p_i}$ is the vector of output measurements and $C_i\in \mathds{R}^{p_i\times n}$ is the output matrix for the $i$-th node.

In \cite{olfati2007distributed}, a distributed algorithm using a group of sensors over an undirected graph was designed to estimate the state variables of a LTI system under the assumption that each pair $(A, C_i)$ is observable.
A distributed observer over a general directed graph was proposed in \cite{su2011cooperative} to solve the cooperative output regulation problem. The more general case of communication graphs with switching typologies was considered in \cite{su2012cooperative}. In these results, it is assumed that a subset of the agents have access to the full state vector $x(t)$, such that $C_i\in\{0_{n\times n},I_n\}$. However, the
agents' dynamics can still reconstruct the full state $x(t)$. % can be viewed as semi-jointly observable distributed systems.
In an attempt to generalize these early results, a distributed estimation scheme was proposed in \cite{park2016design}, in which each agent can estimate the system's state using partial output signals.
Another type of distributed observers was constructed in \cite{mitra2018distributed} for systems that meet a local detectability assumption.  For these systems, it is assumed that the pair $(A,C_{\mathcal{N}_i})$ is observable, where $C_{\mathcal{N}_i}$ contains the output matrix of the $i$-th agent and its neighbours.
The results of \cite{park2016design} and \cite{mitra2018distributed} were generalized to strongly connected graphs in \cite{wang2017distributed}. The approach proposes the design of a reduced-order continuous-time distributed observer that addresses some of the limitations of the results presented in \cite{park2016design}. 
A discrete-time distributed observer design was considered in \cite{wang2019distributed}.

In \cite{wang2017distributed} and \cite{wang2019distributed}, the design of distributed observer was proposed for systems that are jointly observable for which the pair $(A, C)$ is observable with $C=\col(C_1, \cdots, C_N)$.
The jointly observable assumption is the mildest possible restriction as it allows the pair $(A, C_i)$ to be unobservable for each node while enabling the reconstruction of the system's state through the local exchange of information.
In the approach proposed in \cite{wang2017distributed} and \cite{wang2019distributed}, each agent is required to have access to some partial information such that $C_i\neq 0$.
To relax this assumption, a Kalman observable canonical decomposition was used in \cite{han2018simple} to design a full state distributed observer under the jointly observable assumption without requiring $C_i\neq0$.
An improvement of the design of the distributed observer proposed in \cite{han2018simple} and \cite{han2018towards} was developed in  \cite{kim2019completely} by mixing a linear matrix inequality (LMI)-based approach with a reduced-order observer form.
More recently,  a novel design of distributed observers was proposed in \cite{zhang2022decentralized} in which the system \eqref{leader} was transformed to the real Jordan canonical form.
Learning-based approaches were developed in \cite{wang2018adaptive} and \cite{baldi2020distributed} for the design of distributed observers that adaptively estimate the state and parameters of a linear leader system. Distributed observers for systems with nonlinear leader dynamics were presented in \citep*{wu2021design}.
In addition, meaningful and practical considerations of state estimation for a class of linear time-invariant systems with unknown inputs and switching communication topology have been presented in \cite{wang2022robust,yang2022state,cao2023distributed,wang2023distributed} and \cite{yang2023state}, respectively.

It should be noted that all these references are primarily concerned with either continuous-time or discrete-time systems. To date, only limited work has considered the design of estimation techniques for practical aperiodic sampled-data systems. {\myr Distributed state estimation and traditional state estimation for systems with non-uniform sampling were considered in  \cite{li2017robust} and \cite{sferlazza2018time,sferlazza2021state}, respectively. }
For example, the round-robin aperiodic sampled measurements scheme studied in \cite{sferlazza2021state} largely exploits the sequential nature
of the measurement in distributed estimation problem and complements the results presented in \cite{li2017robust}. In particular, a time-varying observer for a linear continuous-time plant with asynchronously sampled measurements was provided in \cite{sferlazza2018time}, which was formulated in the hybrid systems
framework, providing an elegant setting. 

The importance of the communication networks' attributes in the design of distributed observers was demonstrated in a number of studies such as  \cite{su2012cooperative,kim2019completely} and \cite{wu2021design}. 
For example, an analytical relationship between the system matrix $A$ in \eqref{leader} and the minimum real part of the Laplacian matrix's eigenvalues was provided for the continuous-time distributed estimation problem in \cite{su2012cooperative}. 
A sufficient condition was given for both linear and nonlinear system cases under a jointly observable assumption in \cite{kim2019completely} and \cite{wu2021design}, respectively.
An analysis of the impact of the sampling period on consensus behaviour of
second-order systems \citep*{yu2011second,huang2016some} revealed that a consensus cannot be achieved for any sampling period if there exists one eigenvalue of the Laplacian matrix with a nonzero imaginary part.
The interactions between the choice of sampling periods, the network topologies, 
the reference signals, and the related observability of the system were 
fully investigated in \cite{wang2021cooperative}. All non-pathological or pathological
sampling periods were identified.

%This paper investigates the state estimation problem of linear systems for the design of distributed observers with local sampled-data measurement and aperiodic communication. 
Motivated by the studies mentioned above, this paper considers distributed observers for linear systems using {local sampling information and computation}. The main contributions are summarized as follows:
\begin{enumerate}
  \item The design of distributed observers with aperiodic sampled-data information is proposed to estimate the state of the to-be-observed system that satisfies a jointly observable assumption.
  \item An estimated bound for the sampling intervals is given to guarantee the convergence of the estimation error. As long as the sampling periods of all agents' dynamics are smaller than this estimated allowable sampling bound, the estimation error will tend to zero exponentially. In addition, we give an algorithm to calculate the explicit upper bound of the sampling periods using a hybrid system technique.
\item Compared with the existing results in \cite{su2011cooperative} and \cite{ding2013network}, the proposed study relaxes the observability requirement of existing results to tackle the jointly observable assumption. Each agent can asymptotically complete the state of the LTI system to-be-observed using only its partial measurements and its neighbors' state estimates.
\end{enumerate}

The rest of this paper is organized as follows. In Section \ref{section2}, the problem is formulated. Some standard assumptions and lemmas are introduced. Section \ref{mainresults} is devoted to the design of distributed observers. A simulation example in Section \ref{numerexam} followed by brief conclusions in Section~\ref{conlu}.

\textbf{Notation:} Let $\|\cdot\|$ denote both the Euclidean norm of a vector and the Euclidean induced matrix norm (spectral norm) of a matrix. $\mathds{R}$ is the set of real numbers. $\mathds{N}$ denotes all natural numbers. $\mathds{Z}$ ($\mathds{Z}_{+}$) is the set of all (positive) integers. $I_n$ denotes the $n\times n$ identity matrix. For $A\in \mathds{R}^{m\times n}$, $\textnormal{Ker}(A)=\{x\in \mathds{R}^n|Ax=0\}$ and $\textnormal{Im}(A) =\{y\in \mathds{R}^m | y=Ax \textnormal{~~for~~some~~} x \in \mathds{R}^n\}$ denote the kernel and range of $A$, respectively. For a subspace $\mathcal{V}\subset \mathds{R}^{n}$, the orthogonal complement of $\mathcal{V}$ is denoted as $\mathcal{V}^{\bot}=\{x\in \mathds{R}^{n}|x^Tv=0, \forall v\in \mathcal{V}\}$. $\otimes$ denotes the Kronecker product of matrices.
$\textbf{0}$ denotes a zero matrix with conformable dimensions. 
For $b_i\in \mathds{R}^{n_i \times p}$, $i=1,\dots,m$,
$\col(b_1,\dots,b_m)\triangleq\big[
                       \begin{smallmatrix}
                          b_1^T & \cdots & b_m^T \\
                        \end{smallmatrix}
                      \big]^T$. For $a_i\in \mathds{R}^{p \times n_i}$, $i=1,\dots,m$,
$\row(a_1,\dots,a_m)\triangleq\big[
                       \begin{smallmatrix}
                          a_1 & \cdots & a_m\\
                        \end{smallmatrix}
                      \big]$. 
For $X_1\in \mathds{R}^{n_1\times m_1},\dots,X_k\in \mathds{R}^{n_k\times m_k}$,  
\begin{align*}\mbox{diag} (X_1,\dots,X_k)\triangleq\left[
                                                      \begin{array}{ccc}
                                                        X_1 &   &   \\
                                                          & \ddots &   \\
                                                          &   & X_k\\
                                                      \end{array}
                                                    \right].
\end{align*}
\section{Problem Formulation and Assumptions}\label{section2}
In this section, we formulate the \emph{Jointly Observable Tracking Problem} for linear multi-agent systems. {\myr To solve this problem, we introduce the design framework depicted schematically in Figure.~\ref{schemetic}.}
\begin{figure*}[htbp]
 \centering
 % Requires \usepackage{graphicx}
 \includegraphics[trim={5.25cm 18.4cm 5.25cm 4.25cm}, width=0.65\textwidth,clip]{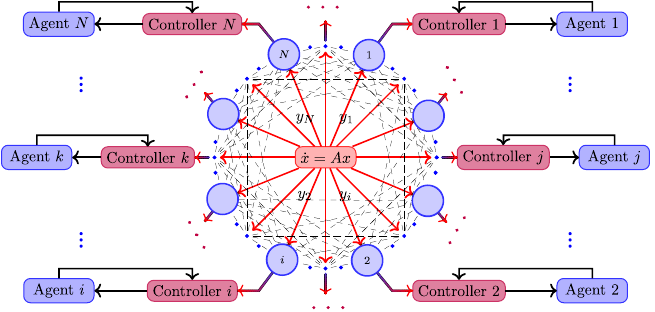}
  \caption{\myr Schematic of Jointly Observable Tracking
Problem}\label{schemetic}
\end{figure*}
\subsection{Agent's dynamics}
We consider the jointly observable network with the system in \eqref{leader} as the LTI system to be observed. The dynamics of each agent take the general form: 
\begin{subequations}\label{follower}
\begin{align}
\dot{\xi}_i(t)=&A_i\xi_i(t)+B_i u_i(t),\;i\in \mathcal{V},\\
y_{\xi_i}(t)=&  F_i\xi_i(t)+D_iu_i(t),
\end{align}
\end{subequations}
where $\xi_i(t)\in \mathds{R}^{n_i}$ is the state variable vector of the $i$-th agent's dynamics, $y_{\xi_i}(t)\in \mathds{R}^{n_{y_{\xi_i}}}$ and $u_i(t)\in \mathds{R}^{m_i}$ is the input used by the agent.

Let $t_0=0$ denote the initial time for the system. We let $\hat{x}_i(t)\in \mathds{R}^n$ denote the local estimate of $x(t)$ for agent $i$ at time moment $t$. Consider a sequence of aperiodic sampling times $t_k$ for $k\in \mathds{N}$.  The discrete-time signal $y_i(t_k)\in \mathds{R}^n$ is the measurement available to agent $i$.   Additionally, each agent samples the state estimates of its neighbours $\hat{x}_j(t_k)$, for $j\in \mathcal{N}_i$, at each sampling instant $t_k$.

\subsection{Graph theory basics}

We introduce some basic elements from graph theory. As in \cite{mitra2018distributed} and \cite{wang2017distributed}, the system composed of \eqref{leader} and \eqref{follower} can be viewed as a multi-agent system with the system to be observed and $N$ agents. The network topology among the multi-agent systems is described by a graph $\mathcal{G}\triangleq\left(\mathcal{V},%
\mathcal{E}\right)$ with $\mathcal{V}\triangleq\{1,\dots,N\}$ and $\mathcal{E}\subseteq\left[\mathcal{V}\right]^2$, which are the
2-element subsets of $\mathcal{V}$. Here, the $i$-th node is associated with the $i$-th agent's dynamics for $i=1,\dots,N$.
For $i=1,\dots,N$, $j=1,\dots,N$, $(j,i) \in {\mathcal{E}} $ if and
only if agent $i$ can receive information from agent $j$. Let $\mathcal{N}_i\triangleq\{j|(j,i)\in \mathcal{E}\}$ denote the neighborhood set of agent $i$. The weighted adjacency matrix of a digraph $\mathcal{G}$ is a nonnegative matrix $\mathcal{A}=[a_{ij}]\in \mathds{R}^{N\times N}$, where $a_{ii}=0$ and $a_{ij}>0\Leftrightarrow (j,i)\in \mathcal{E}$.
Let $\mathcal{{L}}$ be the
Laplacian matrix on graph $\mathcal{G}$, where $l_{ij}$ is the $(i,j)$-th entry of the Laplacian matrix $\mathcal{L}$ with $l_{ii}=\sum_{j=1}^{N}\alpha_{ij}$ and $l_{ij}=-\alpha_{ij}$, $i\neq j$. More details on graph theory can be found in \cite{godsil2013algebraic}.

 In this paper, we consider the design of the input $u(t)$ and a local state observer based on the aperiodic sampled information as follows:
\begin{subequations}\label{pcontrol}\begin{align}
u(t) =&\mathbf{k}_i(\xi_i(t), \hat{x}_i(t)),\\
\dot{\hat{x}}_i(t)=&\textbf{g}_i\Big(\hat{x}_i(t),y_i(t_k), \sum\limits_{j\in{\mathcal{N}}_i}\hat{x}_j(t_k)\Big),\label{generalconrolb}\; t\in [t_k,t_{k+1}),
 \end{align}\end{subequations}
where {\myr $\xi_i$ is directly accessible
for the controller \eqref{pcontrol}}, $\textbf{k}_i(\cdot)$ and $\textbf{g}_i(\cdot)$ are expressions to be designed later, for $i\in \mathcal{V}$. For every $k\in \mathds{N}$, the difference between two adjacent sampling moments is
 $$t_{k+1}-t_k\triangleq h_k,$$
where $h_k\in (0,h_{\max})$ with $h_{\max}$ being some positive number to be determined. In addition, the sampling instants are monotonically increasing sequences satisfying $\lim\limits_{k\rightarrow\infty}t_k=\infty$. As in \cite{laila2002open,karafyllis2007small,nesic2009explicit,sferlazza2018time,oishi2010stability} and \cite{sferlazza2021state}, we define the parameter $h_{\max}$ as an \emph{\myr estimated allowable sampling bound}. While the computation of this quantity is very challenging, its knowledge is imperative to deal with aperiodic sampling.
\subsection{Problem Formulation}
Now we can formulate the \emph{Jointly Observable Tracking Problem} as follows:
\begin{prob}[Jointly Observable Tracking Problem]\label{prob1} Consider the system in \eqref{leader} and \eqref{follower}. Find a distributed control action of the form \eqref{pcontrol} such that for any $\xi_i(0)\in\mathds{R}^{n_i}$,  $\hat{x}_i(0)\in\mathds{R}^{n}$ and $x(0)\in\mathds{R}^{n}$, the closed-loop system satisfies
 $$\myr \lim_{t \to \infty}(y_{\xi_i}(t)-Y_{ix}(t))  =\textbf{0},\;i\in \mathcal{V},$$
{\myr where $x(t)\in \mathds{R}^{n}$ is the state of the to be observed system \eqref{leader} and $Y_{ix}(t)=Y_i x(t)\in \mathds{R}^{n_{y_{\xi_i}}}$ is the tracking signal arising from a matrix $Y_i$ of proper dimensions. }
\end{prob}

It should be noted that, in Problem \ref{prob1}, the $i$-th follower has access to the signal $y_i(t_k)=C_i x_i(t_k)$ only at the discrete-time instant $t_k$ with $k\in \mathds{N}$ and $i\in \mathcal{V}$. Each partial local measurement, $y_i(t)$ is an element of the lumped output $y(t)\triangleq\col(y_1,\cdots,y_N)$ of the  system to be observed in \eqref{leader}, for $i\in \mathcal{V}$.
Compared with the existing work in \cite{su2011cooperative} and \cite{ding2013network}, Problem \ref{prob1} removes the assumption that the full state or observable state of the system to be observed is available to some of the agents.

%In this paper, we use the \emph{zero-order-hold} to design the following dynamics to estimate the state of the leader system in \eqref{leader}: $\forall t\in [t_k,t_{k+1})$,

%It is straghtforward that there exists no agent that can access the signal $x(t)$. Therefore, 
A key technique to solve the \emph{Jointly Observable Tracking Problem} is the \emph{Sampled-Data Distributed Observer} defined in the following.
%
%\begin{Definition}[SDDO] A dynamic of the form in \eqref{generalconrolb} is called a sampled-data distributed observer of $i$-th follower for the leader in \eqref{leader} if, given a communication topology $\mathcal{{G}}$, there exits global defined function $\textbf{g}_i(\cdot)$ and a positive constant $h_{\max}$, such that for any sequence $\{t_k,k\in \mathds{N}\}$ satisfying $h_k\in (0,h_{\max})$, the Problem \ref{prob1} can be solved.
%%the estimate $\hat{x}_i(t)$ of each $i$-th local follower's observer converges to the state $x(t)$, for $i\in \mathcal{V}$.
%\end{Definition}

\begin{Definition}[Sampled-Data Distributed Observer] Given a communication topology $\mathcal{{G}}$, the system  \eqref{generalconrolb} is called a sampled-data distributed observer of the $i$-th node dynamics for the system to be observed \eqref{leader} if there exists globally defined functions $\textbf{g}_i(\cdot)$ and a positive constant $h_{\max}$, such that, for any initial conditions $\hat{x}_i(0)\in\mathds{R}^{n}$ and $x(0)\in\mathds{R}^{n}$, and any sequence $\{t_k,k\in \mathds{N}\}$ satisfying $h_k\in (0,h_{\max})$,
$$\lim_{t\rightarrow\infty}\left(\hat{x}_i(t)-{x}(t)\right)=\textbf{0},~~~~i\in \mathcal{V}.$$
\end{Definition}

\subsection{Assumptions and Lemmas}

We state the following assumptions that will be used in this study.

\begin{ass}\label{ass6} The pair $(A_i, B_i)$ is controllable  $\forall i \in \mathcal{V}$.
\end{ass}

\begin{ass}\label{ass6-ex}  The following linear matrix equations have
solutions $X_i$ and $U_i$ for all $i \in \mathcal{V}$:
\begin{align*}
X_i A =& A_iX_i + B_iU_i,\\
\textbf{0} = &F_iX_i + D_iU_i -Y_i.
\end{align*}
\end{ass}

\begin{ass}\label{ass0} $\mathcal{G}$ is a strongly connected directed graph.
\end{ass}
\begin{ass}\label{ass1} The system in \eqref{leader} is jointly observable in the sense that $(A,C)$ is observable with $C=\textnormal{\col}(C_1,\cdots,C_N)$.
\end{ass}

\begin{rem} {\myr Assumption \ref{ass6-ex} is a standard assumption for the solution of cooperative tracking problems. The linear matrix equations are called regulator equations whose solutions determine the feedforward control gains, as presented in \cite{su2011cooperative}.} For $i\in \mathcal{V}$, we assume that the observability index of $(A, C_i)$ is $v_i$, such that $\textnormal{rank}(\mathcal{O}_i)=v_i$, 
where $\mathcal{O}_i \in \mathds{R}^{(\sum p_i)\times n}$ is the observability matrix and defined {\myr as follows} $\mathcal{O}_i=\textnormal{\col}\left(C_i,C_iA,\cdots,C_iA^{n-1}\right)$. 
For $i\in \mathcal{V}$, the observable subspace and unobservable subspace of $(A,C_i)$ are defined as $\textnormal{Im}(\mathcal{O}_i^T)\subset \mathds{R}^{n}$ and $\textnormal{Ker}(\mathcal{O}_i)\subset \mathds{R}^{n}$, respectively, and satisfy $\textnormal{Ker}(\mathcal{O}_i)^{\bot}=\textnormal{Im}(\mathcal{O}_i^T)$.

For $i\in \mathcal{V}$, let $V_i=\row(V_{ui},V_{oi})\in\mathds{R}^{n\times n}$ be an orthogonal matrix such that $V_iV_i^T=I_n$.
Let $V_{ui}\in\mathds{R}^{n\times (n-v_i)}$ be a matrix such that all columns of $V_{ui}$ are from an orthogonal basis of the $\textnormal{Ker}(\mathcal{O}_i)$ satisfying $\textnormal{Im}(V_{ui})=\textnormal{Ker}(\mathcal{O}_i)$.
Let $V_{oi}\in\mathds{R}^{n\times v_i}$ be a matrix such that all columns of $V_{oi}$ are from an orthogonal basis of the $\textnormal{Im}(\mathcal{O}_i^T)$ satisfying $\textnormal{Im}(V_{oi})=\textnormal{Im}(\mathcal{O}_i^T)$.
%
%It can be verified that $V_{oi}^TAV_{ui}$ and $V_{oi}^TAV_{ui}=0$

%
For $i\in \mathcal{V}$, the matrices $A$ and $C_i$ of the system in \eqref{leader} yield the Kalman observability decomposition as follows:
\begin{subequations}\label{decom}\begin{align}
V_i^{T}AV_i%=&\left[
%            \begin{array}{cc}
%              V_{ui}^TAV_{ui} & V_{ui}^TAV_{oi} \\
%              0 & V_{oi}^TAV_{oi} \\
%            \end{array}
%          \right],~~~C_iV_i=\left[
%                              \begin{array}{cc}
%                                0 & C_iV_{oi} \\
%                              \end{array}
%                            \right]\nonumber\\
=&\left[
            \begin{array}{cc}
              A_{ui} & A_{ri} \\
              \textbf{0} & A_{oi} \\
            \end{array}
          \right],\\
C_iV_i=&\left[
                              \begin{array}{cc}
                                \textbf{0} & C_{oi} \\
                              \end{array}
                            \right],
\end{align}\end{subequations}
where the pair $(A_{oi},  C_{oi})$ is observable, $A_{oi}\in \mathds{R}^{v_i\times v_i}$, $ A_{ri}\in \mathds{R}^{(n-v_i)\times v_i}$, $A_{ui}\in \mathds{R}^{(n-v_i)\times (n-v_i)}$ and $ C_{oi} \in \mathds{R}^{p_i\times v_i}$ admit the following {\myr matrices}: $A_{ui}=V_{ui}^TAV_{ui}$, $A_{ri}=V_{ui}^TAV_{oi}$, $A_{oi}=V_{oi}^TAV_{oi}$ and $C_{oi}=C_iV_{oi}$. \end{rem}
\begin{rem}\label{diagObser} Let $C_{o}=\textnormal{diag}({C_{o1}},\cdots,{C_{oN}})$, $A_{r}=\textnormal{diag}({A_{r1}},\cdots,{A_{rN}})$, $V_{o}=\textnormal{diag}(V_{o1},\cdots,V_{oN})$, $V_{u}=\textnormal{diag}(V_{u1},\cdots,V_{uN})$, $A_{o}=\textnormal{diag}({A_{o1}},\cdots,{A_{oN}})$ and $A_{u}=\textnormal{diag}({A_{u1}},\cdots,{A_{uN}})$.
\end{rem}

Before proceeding, we review some lemmas proposed in \cite{kim2019completely} and \cite{zhang2015constructing}, which will play important roles in analyzing the convergence of the estimation error.
 \begin{lem}\label{meitopol}\citep*{zhang2015constructing} Suppose that the communication network $\mathcal{G}=(\mathcal{V},\mathcal{E})$ is strongly connected. Let $\theta=\col(\theta_1,\cdots,\theta_N)$  be the left eigenvector of the Laplacian matrix $\mathcal{L}$ associated with the eigenvalue $0$, i.e., $\mathcal{L}^T\theta=\textbf{0}$. Then,
  $\Theta =\textnormal{\diag}(\theta_1,\cdots,\theta_N)>0$ and
$\hat{\mathcal{L}}=\Theta \mathcal{L}+\mathcal{L}^T\Theta\geq0$.
 \end{lem}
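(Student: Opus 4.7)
The plan is to handle the two conclusions separately: the strict positivity of the components of $\theta$, and then the positive semidefiniteness of $\hat{\mathcal{L}}$, which I would reduce to the standard Laplacian quadratic-form identity.

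For the first claim, the natural route is the Perron--Frobenius theorem. I would introduce $M = \alpha I - \mathcal{L}^T$ with $\alpha$ chosen larger than $\max_i l_{ii}$. Because $(-\mathcal{L}^T)_{ij} = a_{ji} \geq 0$ for $i\ne j$ and $M_{ii} = \alpha - l_{ii} > 0$, the matrix $M$ is entrywise non-negative. The positivity pattern of $M$ corresponds exactly to the edge set of $\mathcal{G}$ (augmented with self-loops), so strong connectivity of $\mathcal{G}$ by Assumption~\ref{ass0} translates into irreducibility of $M$. Perron--Frobenius then guarantees that the spectral radius $\rho(M) = \alpha$ is a simple eigenvalue of $M$ admitting a strictly positive eigenvector, and any such eigenvector satisfies $M\theta = \alpha\theta - \mathcal{L}^T\theta = \alpha\theta$, so it must be proportional to our left null vector $\theta$. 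After choosing the positive normalization, every $\theta_i > 0$, hence $\Theta > 0$.

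For the second claim, I would give $\hat{\mathcal{L}}$ the interpretation of a weighted Laplacian of a symmetrized undirected graph. Define $\hat{a}_{ij} := \theta_i a_{ij} + \theta_j a_{ji}$ for $i \ne j$, which is non-negative and symmetric in $(i,j)$. A direct computation gives $(\hat{\mathcal{L}})_{ij} = \theta_i l_{ij} + \theta_j l_{ji} = -\hat{a}_{ij}$ for $i\ne j$, while $(\hat{\mathcal{L}})_{ii} = 2\theta_i l_{ii} = 2\theta_i \sum_{k\ne i} a_{ik}$. The balance relation extracted from $\mathcal{L}^T\theta = 0$, namely
\begin{equation*}
\theta_i \sum_{k\ne i} a_{ik} \;=\; \sum_{j\ne i} a_{ji}\theta_j,
\end{equation*}
is exactly what is needed to verify $(\hat{\mathcal{L}})_{ii} = \sum_{j\ne i}\hat{a}_{ij}$, so $\hat{\mathcal{L}}$ is a genuine Laplacian with non-negative weights $\hat{a}_{ij}$. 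The standard identity $x^T \hat{\mathcal{L}} x = \tfrac{1}{2}\sum_{i,j}\hat{a}_{ij}(x_i-x_j)^2 \geq 0$ then yields $\hat{\mathcal{L}}\geq 0$.

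The most delicate part is the first step: translating the purely graph-theoretic hypothesis of strong connectivity into the matrix-theoretic irreducibility required by Perron--Frobenius, and isolating the positive normalization of $\theta$. Once $\Theta > 0$ is secured, the symmetrization argument for $\hat{\mathcal{L}}\geq 0$ is essentially a one-line computation that hinges on the balance identity coming from $\mathcal{L}^T\theta = 0$.
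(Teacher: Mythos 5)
The paper offers no proof of this lemma at all---it is imported verbatim from \cite{zhang2015constructing}---so there is nothing internal to compare against; your argument is correct and is the standard proof of this known result: Perron--Frobenius applied to the nonnegative irreducible matrix $\alpha I-\mathcal{L}^T$ gives simplicity of the zero eigenvalue of $\mathcal{L}^T$ and a strictly positive left eigenvector, and rewriting $\hat{\mathcal{L}}$ as the Laplacian of the symmetrized weighted graph with weights $\hat{a}_{ij}=\theta_i a_{ij}+\theta_j a_{ji}$ (using the balance identity from $\mathcal{L}^T\theta=0$) yields $x^T\hat{\mathcal{L}}x=\tfrac{1}{2}\sum_{i,j}\hat{a}_{ij}(x_i-x_j)^2\geq 0$. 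The only assertion deserving one extra line is $\rho(\alpha I-\mathcal{L}^T)=\alpha$: since $\mathcal{L}$ has zero row sums, $\mathbf{1}^T(\alpha I-\mathcal{L}^T)=\alpha\mathbf{1}^T$, so every column sum equals $\alpha$, which simultaneously shows $\alpha$ is an eigenvalue and bounds the spectral radius by $\alpha$; with that noted, your proof is complete.
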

\begin{lem}\label{kimPD}\citep*{kim2019completely} Suppose that the communication network $\mathcal{G}=(\mathcal{V},\mathcal{E})$ is strongly connected. Then, the following statements are equivalent:
\begin{enumerate}
  \item  The system in \eqref{leader} is jointly observable;
  \item  The matrix $V_{u}^T\big(\hat{\mathcal{L}}\otimes I_n\big)V_{u}$ is positive definite;
  \item  The matrix $V_{u}^T\left({\mathcal{L}}\otimes I_n\right)V_{u}$ is nonsingular.
\end{enumerate}
\end{lem}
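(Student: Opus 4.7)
The plan is to establish the cycle $(1) \Rightarrow (2) \Rightarrow (3) \Rightarrow (1)$, exploiting the block-diagonal structure of $V_u$ together with Lemma~\ref{meitopol}.

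For $(1) \Rightarrow (2)$, I would reduce the positive-definiteness claim to characterizing when the quadratic form $z^T V_u^T(\hat{\mathcal{L}} \otimes I_n) V_u z$ vanishes. Since $\mathcal{G}$ is strongly connected, Lemma~\ref{meitopol} gives $\hat{\mathcal{L}} \geq 0$ with kernel spanned by the all-ones vector $\mathbf{1}$, so $\textnormal{Ker}(\hat{\mathcal{L}} \otimes I_n) = \{\mathbf{1} \otimes v : v \in \mathds{R}^n\}$. Writing $z = \col(z_1,\ldots,z_N)$ and $w_i = V_{ui} z_i$, the quadratic form vanishes iff $w_1 = \cdots = w_N \triangleq w$; such a common $w$ must lie in $\bigcap_i \textnormal{Im}(V_{ui}) = \bigcap_i \textnormal{Ker}(\mathcal{O}_i)$. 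Joint observability forces this intersection to be trivial, so $w = 0$, and since the columns of each $V_{ui}$ are orthonormal, every $z_i = 0$, yielding positive definiteness.

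For $(2) \Rightarrow (3)$, the key observation is that the block-diagonal structure of $V_u = \diag(V_{u1},\ldots,V_{uN})$ together with the scalar-diagonal structure of $\Theta = \diag(\theta_1,\ldots,\theta_N)$ yields the commutation identity $(\Theta \otimes I_n) V_u = V_u \tilde{\Theta}$, where $\tilde{\Theta} \triangleq \diag(\theta_1 I_{n-v_1}, \ldots, \theta_N I_{n-v_N}) > 0$. Applying this to $\hat{\mathcal{L}} = \Theta\mathcal{L} + \mathcal{L}^T \Theta$ produces
\begin{align*}
V_u^T(\hat{\mathcal{L}} \otimes I_n) V_u = \tilde{\Theta} L' + (L')^T \tilde{\Theta},
\end{align*}
where $L' \triangleq V_u^T(\mathcal{L} \otimes I_n) V_u$. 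If $L' z = 0$ for some $z\neq 0$, then $z^T (\tilde{\Theta} L' + (L')^T \tilde{\Theta}) z = 0$, contradicting (2); hence $L'$ must be nonsingular.

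Finally, $(3) \Rightarrow (1)$ follows by contrapositive. If $(A,C)$ fails to be jointly observable, pick $0 \neq w \in \bigcap_{i\in\mathcal{V}} \textnormal{Ker}(\mathcal{O}_i)$ and choose $z_i$ with $V_{ui} z_i = w$; then $V_u z = \mathbf{1} \otimes w$, and since $\mathcal{L}\mathbf{1} = 0$,
\begin{align*}
V_u^T(\mathcal{L} \otimes I_n) V_u z = V_u^T(\mathcal{L}\mathbf{1} \otimes w) = 0,
\end{align*}
exhibiting $L'$ as singular. The main technical obstacle will be recognizing and justifying the commutation $(\Theta \otimes I_n) V_u = V_u \tilde{\Theta}$, which is precisely what allows the symmetric form $\tilde{\Theta} L' + (L')^T \tilde{\Theta}$ to emerge and couples (2) to (3); the remaining steps reduce to linear-algebraic manipulation on the Kalman decomposition subspaces and the spectral properties of $\hat{\mathcal{L}}$ supplied by Lemma~\ref{meitopol}.
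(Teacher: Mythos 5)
The paper never proves this lemma: it is imported verbatim from \cite{kim2019completely} ("we review some lemmas proposed in..."), so there is no in-paper argument to compare against, and your proposal is in effect supplying a proof the paper omits. On its own terms your argument is correct and complete: the cycle $(1)\Rightarrow(2)\Rightarrow(3)\Rightarrow(1)$ goes through. The commutation identity $(\Theta\otimes I_n)V_u=V_u\tilde{\Theta}$ holds because both sides are block diagonal with $i$-th block $\theta_i V_{ui}$, and it does give $V_u^T(\hat{\mathcal{L}}\otimes I_n)V_u=\tilde{\Theta}L'+(L')^T\tilde{\Theta}$ with $L'=V_u^T(\mathcal{L}\otimes I_n)V_u$, from which $(2)\Rightarrow(3)$ follows as you say; the contrapositive $(3)\Rightarrow(1)$ via $V_uz=\mathbf{1}\otimes w$ and $\mathcal{L}\mathbf{1}=0$ is also right. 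Two points should be justified explicitly rather than asserted. First, in $(1)\Rightarrow(2)$ you credit Lemma~\ref{meitopol} with $\textnormal{Ker}(\hat{\mathcal{L}})=\textnormal{span}(\mathbf{1})$, but that lemma only gives $\Theta>0$ and $\hat{\mathcal{L}}\geq 0$; membership of $\mathbf{1}$ in the kernel follows from $\mathcal{L}\mathbf{1}=0$ and $\mathcal{L}^T\theta=0$, while simplicity of the zero eigenvalue needs the observation that $\hat{\mathcal{L}}$ is the Laplacian of the undirected ``mirror'' graph with edge weights $\theta_i a_{ij}+\theta_j a_{ji}$, which is connected precisely because $\mathcal{G}$ is strongly connected and $\theta_i>0$ --- this is the one place strong connectivity is really used in that implication, so it deserves a sentence. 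Second, you implicitly use $\bigcap_{i\in\mathcal{V}}\textnormal{Ker}(\mathcal{O}_i)=\textnormal{Ker}(\mathcal{O})$ for the stacked pair $(A,C)$ with $C=\col(C_1,\cdots,C_N)$; this is immediate (since $\mathcal{O}x=0$ iff $C_iA^kx=0$ for all $i,k$) but should be stated, as it is the exact point where joint observability enters. With those two one-line additions your proof is a self-contained and correct derivation of the lemma.
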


Next, we first introduce some notation related to graphs.

\begin{rem}\label{remarkgraph} Let $\theta_m \triangleq \min\{\theta_1,\cdots,\theta_N\}$ and $\theta_M \triangleq \max\{\theta_1,\cdots,\theta_N\}$. Let $\lambda_{l}$ and $\lambda_{L}$ denote the minimum and maximum eigenvalues of $V_{u}^T(\mathcal{\hat{L}}\otimes I_n)V_{u}$, respectively.
\end{rem}

Before stating the main results of this study, we establish the following lemma. Its proof
can be found in {\myr Appendix \ref{appendixa}}. 
 \begin{lem}\label{lemmaexp} Consider the following sampled-data system
 \begin{align}\label{compactformii}
\dot{z}(t) =&A_{u}z(t)-\gamma V_{u}^T\left(\mathcal{L}\otimes I_n\right)V_{u}z(t_k), \;t\in[t_{k+1}, t_{k}),
\end{align}
where $z(t)=\textnormal{\col}(z_1(t),\cdots,z_N(t))$ with $z_i(t)\in \mathds{R}^{n-\nu_i}$, $i\in \mathcal{V}$. Suppose Assumptions \ref{ass6} and \ref{ass0} hold. Then, for all $\gamma > {\gamma}_{\max}$, and $\tau\in (0, \tau_{0})$, the system in \eqref{compactformii} is exponentially stable at the origin for all $ h_k\in (0, \tau]$ over $\mathds{N}$,
where
\begin{align}\label{barhbargamm}
 {\gamma}_{\max}=\frac{ 2\theta_M\sup_{i\in \mathcal{V}}\|A_{ui}\|}{\lambda_{l}}\;\textnormal{and}\;
 \tau_{0}=\frac{c_1}{c_2}, 
  \end{align}
with \begin{align}\label{C1C2}c_1=& \frac{\gamma\lambda_{l}}{\theta_M}-2\sup\limits_{i\in \mathcal{V}}\|A_{ui}\|,\;\nonumber\\
 c_2= &\frac{\sup\nolimits_{i\in \mathcal{V}}(\|A_{ui}\|+\gamma\|V_{ui}\|^2\|\mathcal{{L}}\|)}{\gamma^{-1}\lambda_{L}^{-1}\theta_m}.\end{align}
 \end{lem}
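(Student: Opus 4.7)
The plan is to build a strict Lyapunov function that respects the block structure induced by the different observability indices $\nu_i$ and that reduces the cross matrix to the pencil bounded by Lemma \ref{kimPD}. To that end, take
\[
V(z) = z^{\top} P z, \qquad P = \mathrm{diag}\big(\theta_1 I_{n-\nu_1},\,\dots,\,\theta_N I_{n-\nu_N}\big),
\]
so that $\theta_m\|z\|^2 \le V(z) \le \theta_M \|z\|^2$. Setting $\hat{L}\triangleq V_u^{\top}(\mathcal{L}\otimes I_n)V_u$, a block-by-block calculation (using that $P$, $A_u$ and $V_u$ are all block-diagonal with the same partition) yields the key identity
\[
P\hat{L} + \hat{L}^{\top} P \;=\; V_u^{\top}(\hat{\mathcal{L}}\otimes I_n)V_u ,
\]
which, by Lemma \ref{kimPD} and Remark \ref{remarkgraph}, satisfies $\lambda_l \|z\|^2 \le z^{\top}(P\hat{L}+\hat{L}^{\top}P)z \le \lambda_L\|z\|^2$. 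This is the mechanism that converts the joint observability hypothesis into a quadratic decay rate.

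Next, I would introduce the sampling error $e(t)\triangleq z(t)-z(t_k)$ on each interval $[t_k,t_{k+1})$ and rewrite \eqref{compactformii} as
\[
\dot z(t) = (A_u-\gamma \hat{L})z(t) + \gamma \hat{L}\,e(t).
\]
Differentiating gives $\dot V = 2z^{\top}PA_u z - \gamma z^{\top}(P\hat{L}+\hat{L}^{\top}P)z + 2\gamma z^{\top}P\hat{L}e$. The first term is bounded by $2\theta_M \sup_i\|A_{ui}\|\,\|z\|^2$ (because of the matched block structure of $P$ and $A_u$), and the second by $-\gamma\lambda_l\|z\|^2$ via the identity above. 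Using $\|z\|^2\ge V/\theta_M$ and the hypothesis $\gamma>\gamma_{\max}$, these two terms combine to $-c_1 V(z)$ with $c_1$ exactly as in \eqref{C1C2}, so
\[
\dot V(z(t)) \;\le\; -c_1 V(z(t)) + 2\gamma\, z(t)^{\top} P\hat{L}\, e(t).
\]

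The final and most delicate step is to absorb the cross term into $V$. Integrating the ODE over $[t_k,t]$ produces
\[
\|e(t)\| \le h_k\,\sup_{i}\!\big(\|A_{ui}\|+\gamma\|V_{ui}\|^2\|\mathcal{L}\|\big)\;\sup_{s\in[t_k,t]}\|z(s)\|,
\]
after bounding $\|A_u\|\le \sup_i\|A_{ui}\|$ and $\|\hat{L}\|\le \sup_i\|V_{ui}\|^2\|\mathcal{L}\|$. Using the PSD bound $|2\gamma z^{\top}P\hat{L}e|\le \gamma\lambda_L(\|z\|^2+\|e\|^2)$ coming from the same identity, together with $\|z\|^2\le V/\theta_m$, this cross term is dominated by $c_2 h_k\,\sup_{s\in[t_k,t]}V(z(s))$ with $c_2$ as in \eqref{C1C2}. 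Combining the two estimates gives a differential inequality of the form $\dot V(t) \le -c_1 V(t) + c_2 h_k \sup_{s\in[t_k,t]}V(s)$, and a short Halanay/Gronwall argument converts this into strict exponential decay of $V$ provided $h_k\le \tau<\tau_0=c_1/c_2$, which is exactly the bound in \eqref{barhbargamm}. The main obstacle is this last step: $\|e(t)\|$ depends on trajectories over the whole interval, so to recover a clean rate with the stated constants one must tame the supremum using a Gronwall/Halanay-style argument rather than a naive pointwise bound, and the skew part of $P\hat{L}$ must be handled carefully so that only $\lambda_L$ (and not a crude $\|P\|\|\hat{L}\|$) enters $c_2$.
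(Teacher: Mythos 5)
Your route is the same as the paper's (Appendix A): the weighted Lyapunov function $V(z)=\sum_{i}\theta_i\|z_i\|^2$, the symmetrization identity $P\hat L+\hat L^{T}P=V_u^{T}(\hat{\mathcal{L}}\otimes I_n)V_u$ with $\hat L\triangleq V_u^{T}(\mathcal{L}\otimes I_n)V_u$ feeding Lemma~\ref{kimPD}, the sampling error $e(t)=z(t)-z(t_k)$, and the target constants $c_1,c_2$. But two steps do not hold up as written. The Young-type split $|2\gamma z^{T}P\hat Le|\le\gamma\lambda_L(\|z\|^2+\|e\|^2)$ cannot be ``dominated by $c_2h_k\sup_s V$'': the $\|z\|^2$ half carries no factor of $h_k$, so it contributes $\gamma\lambda_L V/\theta_m$, which always exceeds $c_1 V$ (since $\lambda_L\ge\lambda_l$ and $\theta_m\le\theta_M$, $\gamma\lambda_L/\theta_m\ge\gamma\lambda_l/\theta_M>c_1$), destroying the decay term; and the $\|e\|^2$ half is of order $h_k^2$ with the \emph{square} of $\sup_i(\|A_{ui}\|+\gamma\|V_{ui}\|^2\|\mathcal{L}\|)$, so the stated $c_2$, and hence $\tau_0=c_1/c_2$, cannot emerge. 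The paper avoids Young's inequality altogether: it keeps the cross term as the product $\gamma\lambda_L\|z\|\,\|e\|$, inserts the integral bound $\|e(t)\|\le(t-t_k)\,\theta_m^{-1/2}\sup_i(\|A_{ui}\|+\gamma\|V_{ui}\|^2\|\mathcal{L}\|)\sqrt{U_M(t_k)}$ with $U_M(t_k)=\max_{s\in[t_k,t_{k+1})}U(s)$, and arrives at $\dot U\le-c_1U+c_2\,(t-t_k)\sqrt{U}\sqrt{U_M(t_k)}$.

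The second issue is that the ``short Halanay/Gronwall argument'' you defer to is precisely the nontrivial part, and it must handle this square-root/supremum structure, not a linear Halanay inequality. The paper closes it in three concrete steps: (i) a contradiction argument showing $\max_{s\in[t_k,t_{k+1})}U(s)=U(t_k)$ whenever $h_k\le\tau<\tau_0$, which hinges on $\dot U(t_k)\le-c_1U(t_k)<0$ — this is exactly where $\gamma>\gamma_{\max}$ enters and it is what collapses the supremum so the differential inequality becomes self-contained; (ii) the substitution $\eta=\sqrt{U(t)/U(t_k)}$, which linearizes it to $\dot\eta\le-\tfrac{c_1}{2}\eta+\tfrac{c_2\tau}{2}$; (iii) the comparison lemma, giving the per-interval contraction $U(t_{k+1})\le\rho^2U(t_k)$ with $\rho=e^{-c_1h_k/2}(1-c_2\tau/c_1)+c_2\tau/c_1<1$ for $\tau<\tau_0$, hence exponential decay of $U(t_k)$ and, via (i), of $U(t)$. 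Without step (i), or an equivalent Halanay-type lemma stated and proved for the $\sqrt{U}\sqrt{U_M}$ form, your argument is not complete. Your closing concern about the skew part of $P\hat L$ (so that $\lambda_L$ rather than $\|P\hat L\|$ appears in $c_2$) is a legitimate one, but the paper's own estimate makes the same replacement, so it is not a point on which your proof needs to go beyond the lemma as stated.
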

%\begin{proof}
%See Appendix \ref{appendixa}.
%\end{proof}

\section{Main Results}\label{mainresults}
In this section, we present the design and  analysis of the proposed aperiodic sampled-data distributed observers.
\subsection{Aperiodic Sampled-Data Distributed Observers Design}\label{mainresults1}
The dynamics for the proposed linear distributed observer $\forall t\in [t_k,t_{k+1})$ is given by:
\begin{align}
\dot{\hat{x}}_i(t)=&A\hat{x}_i(t) + L_i(C_i\hat{x}_i(t_{k})-y_i(t_{k}))\nonumber\\
&+\gamma M_i\sum\nolimits_{j\in\mathcal{N}_i}{(\hat{x}_j(t_{k})-\hat{x}_i(t_{k}))},\;i\in \mathcal{V},\label{compensator}
\end{align}
 where $t_0$ is the initial time, $t_{k+1}-t_k=h_k$ over $k\in \mathds{N}$, $\gamma\geq \gamma_{\max}$ and $h_k\in (0,h_{\max})$ with {\myr $h_{\max}$ being some positive numbers to be determined, and $\gamma_{\max}$ is given in \eqref{barhbargamm}.} The matrices $L_i$ and $M_i$ are defined as follows:
 \begin{align}\label{generalV}
 L_i=V_i\left[
          \begin{array}{c}
            \textbf{0} \\
             L_{oi}\\
          \end{array}
        \right]\; \textnormal{and} \;M_i=V_i\left[
                           \begin{array}{cc}
                             I_{n-v_i}& \textbf{0} \\
                             \textbf{0} &\textbf{0}\\
                           \end{array}
                         \right]V_i^T,
\end{align}
with $L_{oi}\in \mathds{R}^{v_i\times p_i}$ chosen such that $\bar{A}_i\triangleq A_{oi}+L_{oi}C_{oi}$ is Hurwitz, for $i\in \mathcal{V}$.

The \emph{estimated allowable sampling bound} $h_{\max}$ is calculated following Algorithm \ref{algoth}. The computation requires the definition of the following function: 
 \begin{align}\label{Firstsampling}
 \tau_{1}(\chi, \kappa):=\left\{\begin{array}{cc}
                                        \frac{1}{\kappa r}\arctan(r), & \textnormal{if}\; \chi>\kappa; \\
                                        \frac{1}{\kappa},   & \textnormal{if}\; \chi=\kappa; \\
                                        \frac{1}{\kappa r}\operatorname{arctanh}(r), & \textnormal{if}\; \chi<\kappa;
                                     \end{array}
 \right.
 \end{align}
 where $r=\sqrt{|\frac{\chi^2}{\kappa^2}-1|}$, $\kappa=\sup\limits_{i\in \mathcal{V}}\|L_{oi}C_{oi}\|$ and $\chi\geq \chi_{\max}$ with $\chi_{\max}=\max\{\chi_{1},\cdots,\chi_{N}\}$ where each $\chi_i$ is given as:
 \begin{equation}\label{ALCHINFI}
 \chi_i=\|\bar{A}_i^T(sI-\bar{A}_i)^{-1}L_{oi}C_{oi}\|_{\infty}.
 \end{equation}
%${\gamma}^{*}_1$ is given as follows:
\begin{algorithm}
\caption{\myr An estimated allowable sampling bound $h_{\max}$}\label{algoth}
\begin{algorithmic}
%\Procedure{}{}
%\begin{enumerate}
   \State \textbf{1.} Select $\Theta=\diag(\theta_1,\cdots,\theta_N)$ such that $\Theta \mathcal{L}+\mathcal{L}^T\Theta$ is positive semi-definite matrix.
   \State \textbf{2.} Choose positive constant $\gamma>\gamma_{\max}$ using \eqref{barhbargamm}.
   \State \textbf{3.} Compute $\tau_{0}=\frac{c_1}{c_2}$ given in \eqref{barhbargamm} and \eqref{C1C2}.
    \State \textbf{4.} Select $\chi\geq\chi_{\max}$ using \eqref{ALCHINFI}.
    \State \textbf{5.} Calculate $\tau_{1}(\chi, \kappa)$ using \eqref{Firstsampling}.
   \State \textbf{6.} Let $h_{\max}=\min\{\tau_{1}(\chi, \kappa), \tau_{0}\}$.
  %
%\EndProcedure
\end{algorithmic}
\end{algorithm}
\begin{rem} The jointly observable distributed observer for the continuous-time case was considered in \cite{kim2019completely}. In terms of our notation, the convergence of the observer proposed in \cite{kim2019completely} can be guaranteed if $\gamma$ is chosen such that
\begin{align*}\gamma\geq\frac{ 2\theta_M\sup_{i\in \mathcal{V}}\|A_{ui}\|+\psi}{\lambda_{l}},\end{align*}
where $\psi$ is larger than the maximum of all the real parts of the eigenvalues of $A_{oi}+L_{oi}C_{oi}$. As $h_k\rightarrow0$, the sampled-data distributed observer in \eqref{compensator} reduces to a  continuous-time observer. For the design proposed in this study, the constant $\gamma_{\max}$ as defined in \eqref{barhbargamm} can be chosen smaller than $\displaystyle{\frac{ 2\theta_M\sup_{i\in \mathcal{V}}\|A_{ui}\|+\psi}{\lambda_{l}}}$. 

{\myr It is important to note that the required gain $\gamma_{\max}$ and estimated allowable sampling bound $\tau_0$ in \eqref{barhbargamm} relies on centralized properties of the system such as $\theta_M$, $\sup_{i\in \mathcal{V}}\|A_{ui}\|$ and $\lambda_{l}$, which are graph related parameters, parameters of the unobservable parts of the system to be observed and the minimum eigenvalues of distributed state estimation induced matrix $V_{u}^T(\mathcal{\hat{L}}\otimes I_n)V_{u}$. Therefore, if there are some uncertainties arising from the system \eqref{leader} and the network, the robustness issues will impact the choice of the gain $\gamma_{\max}$ and the estimated allowable sampling bound $\tau_0$, increasing the complexity of the distributed state estimation problem based on sampled data. Some work in continuous-time cases has been considered in \cite{wang2022robust,yang2022state,cao2023distributed} for systems subject to unknown inputs and external disturbances. Further research is required to address these situations. 
It should also be noted that one adaptive approach has been proposed in \cite{kim2019completely} for the adaptive estimation of the gain $\gamma_{\max}$ in the continuous-time case.  }
\end{rem}

\subsection{Convergence Analysis}

For $i\in \mathcal{V}$, let $\tilde{x}_i(t)=\hat{x}_i(t)-x(t)$ be the estimation error of the $i$-th observer at time instant $t$. Then, $\forall t\in [t_k,t_{k+1})$, we have
\begin{align}\label{decom1}
\dot{\tilde{x}}_i(t)=&A\tilde{x}_i(t)+ L_iC_i\tilde{x}_i(t_{k})\nonumber\\
& \hspace{0.385in}+\gamma M_i\sum\nolimits_{j\in\mathcal{N}_i}{(\tilde{x}_j(t_{k})-\tilde{x}_i(t_{k}))}\nonumber\\
=&A\tilde{x}_i(t)+ L_iC_i\tilde{x}_i(t_{k})-\gamma M_i\sum\nolimits_{j=1}^{N} l_{ij}{\tilde{x}_j(t_{k})},
\end{align}
where $l_{ij}$ is the $(i,j)$-th entry of the Laplacian matrix $\mathcal{L}$.
Let $\tilde{x}_{oi}=V_{oi}^T\tilde{x}_i$ and $\tilde{x}_{ui}=V_{ui}^T\tilde{x}_i$, for $i\in \mathcal{V}$. Then, we have the following system from \eqref{decom} and \eqref{decom1}, $\forall t\in [t_k,t_{k+1})$,
\begin{subequations}\label{decom2}\begin{align}
\dot{\tilde{x}}_{ui}(t)=&A_{ui}\tilde{x}_{ui}(t)+A_{ri}\tilde{x}_{oi}(t)\nonumber\\
&-\gamma V_{ui}^T\sum\limits_{j=1}^{N} l_{ij}\big[{V_{uj}\tilde{x}_{uj}(t_{k})+V_{oj}\tilde{x}_{oj}(t_{k})}\big],\label{decom2b}\\
\dot{\tilde{x}}_{oi}(t)=&A_{oi}\tilde{x}_{oi}(t)+L_{oi}C_{oi}\tilde{x}_{oi}(t_k),\;i\in \mathcal{V}.\label{decom2a}
\end{align}
\end{subequations}
Let $\tilde{x}_{u}=\col(\tilde{x}_{u1},\cdots,\tilde{x}_{uN})$, $\tilde{x}_{o}=\col(\tilde{x}_{o1},\cdots,\tilde{x}_{oN})$ and $L_{o}=\textnormal{diag}({L_{o1}},\cdots,{L_{oN}})$. Then, the system in \eqref{decom2} can be put into the following compact form, $\forall t\in [t_k,t_{k+1})$,
\begin{subequations}\label{compactform}
\begin{align}
%\\
\dot{\tilde{x}}_{u}(t)=&A_{u}\tilde{x}_{u}(t)+A_r\tilde{x}_{o}(t)\nonumber\\
&-\gamma V_{u}^T\left(\mathcal{L}\otimes I_n\right)\big[V_{u}\tilde{x}_{u}(t_k)+V_{o}\tilde{x}_{o}(t_{k})\big],\label{compactforma}\\
\dot{\tilde{x}}_{o}(t)=&A_{o}\tilde{x}_{o}(t)+L_{o}C_{o}\tilde{x}_{o}(t_k),\label{compactformb}
\end{align}\end{subequations}
where $A_{u}$, $A_r$, $A_{o}$, $C_{o}$ and $V_{u}$ are defined in Remark \ref{diagObser}.

The first step of the analysis of convergence of system \eqref{compactform} is to establish the stability of the system  \eqref{compactformb}. In the following lemma, an emulation-based approach as proposed in \cite{nesic2009explicit} is used to analyze the stability properties of \eqref{compactformb}.

\begin{lem}\label{discrettimeexo} For the system in \eqref{compactformb}, suppose Assumption \ref{ass1} holds. Choose $\kappa$, $\chi_{\max}$, $\chi$ and $\tau_{1}$ according to \eqref{Firstsampling} and \eqref{ALCHINFI}. Then, for all $\chi \geq \chi_{\max}$ and $\tau\in (0, \tau_{1})$, the system in \eqref{compactformb} is exponentially stable at the origin for all $ h_k\in (0, \tau]$, $k\in \mathds{N}$.
\end{lem}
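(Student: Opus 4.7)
The plan is to exploit the block-diagonal structure of \eqref{compactformb}. Since $A_{o}$, $L_{o}$, and $C_{o}$ are all block diagonal by Remark \ref{diagObser}, the aggregated system decouples into $N$ independent subsystems
$$\dot{\tilde{x}}_{oi}(t) = A_{oi}\tilde{x}_{oi}(t) + L_{oi}C_{oi}\tilde{x}_{oi}(t_k), \qquad t\in[t_k,t_{k+1}),$$
for $i\in\mathcal{V}$. It therefore suffices to prove exponential stability of each subsystem for $h_k\in(0,\tau]$ with $\tau<\tau_{1}(\chi_i,\kappa)$; the statement then follows because taking $\chi\ge\chi_{\max}$ and using the monotonicity of $\tau_{1}$ in its first argument forces the bound $\tau_{1}(\chi,\kappa)$ to be admissible simultaneously for every $i$.

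For each subsystem, the natural next step is to rewrite it in emulation form around the nominal Hurwitz closed loop $\bar{A}_i=A_{oi}+L_{oi}C_{oi}$:
$$\dot{\tilde{x}}_{oi}(t) = \bar{A}_i\tilde{x}_{oi}(t) - L_{oi}C_{oi}\,w_i(t), \qquad w_i(t):=\tilde{x}_{oi}(t)-\tilde{x}_{oi}(t_k),$$
so that the sampling error $w_i$ appears as a disturbance acting on an exponentially stable system. Differentiating $w_i$ and substituting shows that $w_i$ itself is driven by $\bar{A}_i$ and $\tilde{x}_{oi}$, and the relevant input/output pair is precisely the one whose transfer function $\bar{A}_i^{T}(sI-\bar{A}_i)^{-1}L_{oi}C_{oi}$ appears in the definition \eqref{ALCHINFI} of $\chi_i$. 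The quantity $\kappa=\sup_{i}\|L_{oi}C_{oi}\|$, which upper-bounds the feedback through the sampling error, then plays the complementary role of a worst-case static gain.

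The core analytic step is to combine a quadratic Lyapunov function for $\bar{A}_i$ with the integral representation $w_i(t)=\int_{t_k}^{t}\dot{\tilde{x}}_{oi}(s)\,ds$ and apply a small-gain / $H_\infty$ argument in the spirit of the emulation technique of \cite{nesic2009explicit}. The $H_\infty$ bound gives $\|w_i\|_{L_2}\le\kappa h_k\,\|\tilde{x}_{oi}\|_{L_2}+\chi_i h_k\,(\textnormal{sampling-error terms})$ on each interval $[t_k,t_{k+1})$, and collecting terms yields a scalar inequality in $h_k$, $\chi_i$ and $\kappa$ that must be strictly less than one for contraction. Exponential decay over $[t_k,t_{k+1})$ will then compose through a standard induction across the aperiodic sampling sequence.

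The main obstacle will be recovering the exact piecewise formula \eqref{Firstsampling}. The discriminant $r=\sqrt{|\chi^{2}/\kappa^{2}-1|}$ strongly suggests that the contraction condition is equivalent to a quadratic inequality in $h_k\kappa$, whose sign structure depends on whether $\chi>\kappa$, $\chi=\kappa$, or $\chi<\kappa$: in the first case the critical $h_k$ comes from the positive root $1/\sqrt{\chi^{2}-\kappa^{2}}$, in the third from $1/\sqrt{\kappa^{2}-\chi^{2}}$, while the degenerate equality $\chi=\kappa$ collapses the quadratic to a linear bound $1/\kappa$. Carrying out this case analysis carefully and checking that the resulting threshold is not only sufficient for exponential decrease of the Lyapunov function but also uniform over all sequences $\{t_k\}$ with $h_k\in(0,\tau]$ is where the technical work concentrates; once the inequality is secured, exponential stability of \eqref{compactformb} follows immediately.
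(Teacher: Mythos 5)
Your overall framing---decoupling \eqref{compactformb} into its $N$ blocks, treating the sampling-induced error $w_i(t)=\tilde{x}_{oi}(t)-\tilde{x}_{oi}(t_k)$ as a disturbance on the Hurwitz loop $\bar{A}_i=A_{oi}+L_{oi}C_{oi}$, and appealing to the emulation machinery of \cite{nesic2009explicit}---is the right circle of ideas, and it is indeed the route the paper takes. But there is a genuine gap: the core analytic step is only announced, never performed. The inequality you posit, $\|w_i\|_{L_2}\le \kappa h_k\|\tilde{x}_{oi}\|_{L_2}+\chi_i h_k\,(\cdots)$, is not derived, and it conflates the roles of the two constants: $\kappa=\sup_{i}\|L_{oi}C_{oi}\|$ bounds how the sampling error feeds back into its own dynamics, whereas $\chi_i$ in \eqref{ALCHINFI} is an $H_\infty$ norm whose only operative use is through the bounded real lemma, namely that $\chi\ge\chi_i$ guarantees a positive definite $P_i$ solving the LMI \eqref{PQALC}. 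Without producing that $P_i$ (or an equivalent dissipation inequality) there is no quantitative link between $\chi$ and any Lyapunov decrease, so ``collecting terms'' cannot yield a contraction condition in which $\chi$ appears, let alone the specific threshold $\tau_1(\chi,\kappa)$.

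Relatedly, your conjectured origin of \eqref{Firstsampling}---the positive root of a quadratic inequality in $h_k\kappa$, with the three cases governed by the sign of $\chi^2-\kappa^2$---is not the mechanism at work, and a per-interval small-gain/Gronwall estimate of the kind you sketch would in general produce a different, more conservative bound rather than the stated one. In the paper (following \cite{carnevale2007lyapunov} and \cite{nesic2009explicit}), $\tau_1$ is tied to the time the solution of the Riccati differential equation $\dot{\phi}=-2\kappa\phi-\chi(\phi^2+1)$, $\phi(0)=\lambda^{-1}$, takes to decay to $\lambda$; the proof models \eqref{compactformb} as a hybrid system with clock $\tau$, uses the clock-dependent Lyapunov function $U(\tau,\tilde{x}_{o},e)=\tilde{x}_{o}^TP_d\tilde{x}_{o}+\phi(\tau)e^Te$ with $P_d=\diag(P_1,\cdots,P_N)$ from \eqref{PQALC}, verifies that $U$ does not increase at jumps and satisfies $\dot{U}\le-\rho^{*}U$ along flows, and concludes exponential stability uniformly over all aperiodic sequences with $h_k\in(0,\tau]$, $\tau<\tau_1$, via Theorem 2 of \cite{nesic2009explicit}. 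That construction is exactly the missing idea in your plan, and it also settles two points you leave unchecked: the composition of decay across aperiodic intervals (handled automatically by the hybrid Lyapunov function, not by a separate induction) and the claim that one common $\chi\ge\chi_{\max}$ serves all blocks, which you justify by an asserted monotonicity of $\tau_1$ in its first argument but which in the paper follows simply because \eqref{PQALC} is feasible for every block once $\chi\ge\chi_i$.
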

\begin{proof} Let $e(t)=\tilde{x}_{o}(t_k)-\tilde{x}_{o}(t)$ be the sampling-induced error, for any $t\in [t_k, t_{k+1})$ over $k\in \mathds{N}$ and $i\in \mathcal{V}$. Then, the dynamics in \eqref{compactformb} can be rewritten in the following manner:
\begin{align}\label{hbrid}
    \left\{       \begin{array}{cc}
                   \dot{\tilde{x}}_{o}(t)=f(\tilde{x}_{o}(t), e(t)),&\forall t\in [t_k, t_{k+1});\\
           \dot{e}(t) =g(\tilde{x}_{o}(t), e(t)),&\forall t\in [t_k, t_{k+1}); \\
         e(t_k^{+})=\textbf{0},& k\in \mathds{N};
           \end{array}\right.
\end{align}
where {\myr $f(\tilde{x}_{o}, e)=\bar{A}_d\tilde{x}_{o}+L_{o}C_{o}e$ and $g(\tilde{x}_{o}, e)=-f(\tilde{x}_{o}, e)$ with 
$\bar{A}_d=\diag(\bar{A}_1,\cdots, \bar{A}_N)$ and $\bar{A}_i=A_{oi}+L_{oi}C_{oi}$, $i\in \mathcal{V}$.}

Let $\tau_{1}(\chi, \kappa)$ be defined following \eqref{Firstsampling}. The dynamics of \eqref{hbrid} with $h_k\in[\epsilon, \tau_{1}]$ can be modeled as a hybrid system of the form:
\begin{equation}\label{hybrimodel}
  \begin{cases}
           \left. \begin{array}{c}
          \dot{\tilde{x}}_{o}=f(\tilde{x}_{o}, e)\\
           \dot{e} =g(\tilde{x}_{o}, e) \\
          \dot{\tau}=1 \\
         \end{array}\right\}\;{\tau}\in[0, \tau_{1}],\;\textnormal{flow dynamics};\\
               \left. \begin{array}{c}
          \tilde{x}_{o}^{+}=\tilde{x}_{o}\\
           {e}^{+} =\textbf{0} \\
          {\tau}^{+}=0 \\
         \end{array}\right\}\;{\tau}\in[\epsilon, \infty],\;\textnormal{jump dynamics}\footnotemark,     \end{cases}
\end{equation}
where $\tau \in \mathds{R}_{+}$ is a clock state, $\epsilon$ is an arbitrary small positive number.  
\footnotetext{$x^{+}$ denotes the state of a hybrid system after a jump.}
For, $\lambda\in(0, 1)$, let $\phi:[0,\tau_{1}]\rightarrow \mathds{R}$ be the solution of the following differential equation:
\begin{equation*}\dot{\phi}=-2\kappa \phi- \chi (\phi^2+1),\;\phi(0)=\lambda^{-1}.\end{equation*}
According to \cite{carnevale2007lyapunov}, it can be shown that $\phi\in [\lambda, \lambda^{-1}]$. For $i\in \mathcal{V}$, $\bar{A}_i$ is Hurwitz, for any positive number $\chi\geq\chi_{\max}$, where $\chi_{\max}$ is defined in \eqref{ALCHINFI}. As a result, there exists a positive definite matrix $P_i\in \mathds{R}^{\nu_i\times \nu_i}$ that satisfies the matrix inequality:
 \begin{align}\label{PQALC}
  \left[
    \begin{array}{cc}
       \bar{A}_i^TP_i+P_i\bar{A}_i+\frac{1}{\chi}\bar{A}_i^T\bar{A}_i & P_iL_{oi}C_{oi} \\
      C_{oi}^TL_{oi}^TP_i & -\chi I_{v_i}\\
    \end{array}
  \right] < 0.
  \end{align}
We pose a candidate Lyapunov function as follows:
$$U(\tau,\tilde{x}_{o}, e)= \tilde{x}_{o}^TP_d\tilde{x}_{o}+\phi(\tau)e^Te,$$
where $P_d=\diag(P_1,\cdots,P_{N})$ {\myr in which each $P_i$ is a positive definite symmetric solution} of the matrix inequality \eqref{PQALC}.
On the jump domain, \eqref{hybrimodel}, it is noted that
\begin{align*}
U(\tau^+,\tilde{x}_{o}^+, e^+)=&(\tilde{x}_{o}^+)^TP_d\tilde{x}_{o}^++\phi(\tau^+)(e^+)^Te^+\\
=&\tilde{x}_{o}^TP_d\tilde{x}_{o}\leq U(\tau,\tilde{x}_{o}, e).
\end{align*}
In addition, from the flow dynamics in \eqref{hybrimodel}, we obtain the following inequalities for the time derivative of the Lyapunov function $U(\tau,\tilde{x}_{o}, e)$:
\begin{align*}
\dot{U}%= & 2\big[\tilde{x}_{o}^TP_d\dot{\tilde{x}}_{o}+\phi(\tau)\dot{e}^Te\big]+\dot{\phi}(\tau)e^Te\\
= \,\, &\tilde{x}_{o}^T\big[P_d\bar{A}_d+\bar{A}_d^TP_d\big]\tilde{x}_{o}+2\tilde{x}_{o}^TP_dL_{o}C_{o}e\\
& +\big[-2\kappa \phi(\tau)e^Te- \chi (\phi^2(\tau)+1)e^Te\big]\\
&+2\phi(\tau)e^T\big[-\bar{A}_d\tilde{x}_{o}-L_{o}C_{o}e\big]\\
\leq \,\, &\tilde{x}_{o}^T\big[P_d\bar{A}_d+\bar{A}_d^TP_d+\frac{1}{\chi}\bar{A}_d^T\bar{A}_d\big]\tilde{x}_{o}\\
&+2\tilde{x}_{o}^TP_dL_{o}C_{o}e+\chi\phi^2(\tau)e^Te-2\phi(\tau)e^TL_{o}C_{o}e\\
&+\big[-2\kappa \phi(\tau)e^Te- \chi (\phi^2(\tau)+1)e^Te\big]\\
\leq \,\, &\tilde{x}_{o}^T\big[P_d\bar{A}_d+\bar{A}_d^TP_d+\frac{1}{\chi}\bar{A}_d^T\bar{A}_d\big]\tilde{x}_{o}\\
&+2\tilde{x}_{o}^TP_dL_{o}C_{o}e- \chi e^Te\\
%\leq &\left[
%        \begin{array}{c}
%          \tilde{x}_{o} \\
%          e \\
%        \end{array}
%      \right]^T\left[
%                 \begin{array}{cc}
%                  P_d\bar{A}_d+\bar{A}_d^TP_d+\frac{1}{\gamma}\bar{A}_d^T\bar{A}_d   &  P_dL_{o}C_{o} \\
%                   C_{o}^T L_{o}^TP_d &  - \gamma I  \\
%                 \end{array}
%               \right]\left[
%        \begin{array}{c}
%          \tilde{x}_{o} \\
%          e \\
%        \end{array}
%      \right]\\
\leq & -\rho^{*}U,
\end{align*}
where $\rho^{*}$ is a positive constant. Since $\bar{A}_d$ is Hurwitz, it follows that the system $\dot{\tilde{x}}_{o}=f(\tilde{x}_{o}, 0)$ is exponentially stable. It then follows from Theorem 2 of \cite{nesic2009explicit} that {\myr the set $\{\tilde{x}_o=0,e=0,\tau\in[0,\tau_1]\}$ is exponentially stable for system \eqref{hybrimodel}}. This further implies {\myr system \eqref{compactformb} is exponentially stable at the origin for all $ h_k\in (0, \tau]$ over $k\in \mathds{N}$.} This completes the proof.
\end{proof}
%We now establish the following lemma to analyze the stability of system \eqref{compactforma} by assuming $\tilde{x}_{o}(t)=0$ for all $t\geq 0$.

 %\begin{lem}\label{discrettimeexo} Consider system \eqref{compactform}. Suppose Assumptions \ref{ass0} and \ref{ass1} hold. Choose $ \gamma_{\max}$ and $\tau_{0}$ according to \eqref{barhbargamm}. Then, for all $\gamma > \gamma_{\max}$, and $T\in (0, \tau_{0})$, the system in \eqref{compactformb} is exponentially stable at the origin for all $ h_k\in (0, T]$, $k\in \mathds{N}$, provided that $\tilde{x}_{o}(t)=0$ for all $t\geq 0$.
%\end{lem}

To analyze the stability of the system in \eqref{compactform}, we discretize the continuous-time system \eqref{decom2} into a discrete-time system form. We employ the step-invariant transformation discretization technique found in \cite{chen2012optimal}  to transform the continuous-time system \eqref{compactform} to the following  time-varying discrete-time system:
\begin{subequations}\label{souludecomlamda}\begin{align}
{\tilde{x}}_{u}(t_{k+1})=&\Lambda(h_k)\tilde{x}_{u}(t_k)+g(t_k),\label{souludecom2}\\
{\tilde{x}}_{oi}(t_{k+1})=&\Big[e^{A_{oi}h_k}+\int_{0}^{h_k}e^{A_{oi}\tau}L_{oi} C_{oi}d\tau \Big]\tilde{x}_{oi}(t_k),\label{discrete}
\end{align}\end{subequations}
where \begin{align*}
\Lambda(h_k)=&e^{A_{u}h_k}-\gamma \int_{0}^{h_k}e^{A_{u}\tau}d\tau V_{u}^T\left(\mathcal{L}\otimes I_n\right)V_{u},\\
g(t_k)=&-\gamma \int_{0}^{h_k}e^{A_{u}\tau}d\tau V_{u}^T\left(\mathcal{L}\otimes I_n\right) V_{o}\tilde{x}_{o}(t_{k})\\
&+\int_{0}^{h_k}e^{A_{u}\tau}A_{r}\tilde{x}_{o}(t_k+\tau)d\tau.\end{align*} Then, we have the following results.
\begin{Corollary} \label{corr} Consider system \eqref{souludecomlamda} and let Assumptions \ref{ass0} and \ref{ass1} be met. Furthermore, choose ${\gamma}_{\max}$ and $\tau_{0}$ according to \eqref{barhbargamm}. Then, for all $\gamma \geq  {\gamma}_{\max}$ and $\tau\in (0, \tau_{0})$, the system \eqref{souludecomlamda} is exponentially stable at the origin for all $ h_k\in (0, \tau]$ over $k\in \mathds{N}$, provided that {\myr $\tilde{x}_{o}(t)=0$ for all $t\geq 0$}.
\end{Corollary}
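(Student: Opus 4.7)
The plan is to reduce the claim to a direct application of Lemma \ref{lemmaexp}. Under the standing hypothesis $\tilde{x}_{o}(t_k)=0$ for every $k\in \mathds{N}$, I first want to show that $\tilde{x}_{o}(t)\equiv 0$ on $[t_0,\infty)$. This follows because the continuous-time dynamics \eqref{compactformb} reduce on each interval $[t_k,t_{k+1})$ to the autonomous linear ODE $\dot{\tilde{x}}_{o}(t)=A_{o}\tilde{x}_{o}(t)$ (the term $L_{o}C_{o}\tilde{x}_{o}(t_k)$ vanishes by hypothesis), which has the unique solution $\tilde{x}_{o}(t)=0$ whenever $\tilde{x}_{o}(t_k)=0$. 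Consequently, \eqref{discrete} is trivially satisfied.

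With $\tilde{x}_{o}(t)\equiv 0$, the forcing term $g(t_k)$ in \eqref{souludecom2} vanishes identically: the first summand is zero because $\tilde{x}_{o}(t_k)=0$, and the second summand is zero because the integrand contains $\tilde{x}_{o}(t_k+\tau)=0$ for $\tau\in[0,h_k]$. The $\tilde{x}_u$-subsystem therefore collapses to
\begin{equation*}
\tilde{x}_{u}(t_{k+1})=\Lambda(h_k)\tilde{x}_{u}(t_k),\quad k\in \mathds{N}.
\end{equation*}

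Next I will identify this one-step map as the exact step-invariant discretization of the continuous-time sampled-data system in Lemma \ref{lemmaexp}. Integrating \eqref{compactformii} with $z(t_k)$ held constant on $[t_k,t_{k+1})$ yields
\begin{equation*}
z(t_{k+1})=\Bigl(e^{A_u h_k}-\gamma\int_0^{h_k}e^{A_u\tau}d\tau\, V_u^T(\mathcal{L}\otimes I_n)V_u\Bigr)z(t_k)=\Lambda(h_k)z(t_k),
\end{equation*}
so the two recursions coincide under the identification $z=\tilde{x}_u$. Since Assumptions \ref{ass0} and \ref{ass1} are in force, and $\gamma\geq \gamma_{\max}$, $\tau\in(0,\tau_0)$ are chosen as in \eqref{barhbargamm}, Lemma \ref{lemmaexp} guarantees exponential stability of \eqref{compactformii} for all admissible sampling sequences with $h_k\in(0,\tau]$. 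Restricting this exponential estimate to the sampling instants $\{t_k\}$ yields $\|\tilde{x}_u(t_k)\|\leq M e^{-\lambda t_k}\|\tilde{x}_u(t_0)\|$ with $t_k\to\infty$, which is precisely exponential stability of the discrete-time recursion driving \eqref{souludecom2}, completing the argument.

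I do not anticipate a substantive obstacle: the only nontrivial point is the elimination of $g(t_k)$, which requires observing that $\tilde{x}_{o}\equiv0$ on the continuum (not merely at the sampling times) so that the integral term with $\tilde{x}_{o}(t_k+\tau)$ also disappears. Once that is verified, the result is a one-line consequence of Lemma \ref{lemmaexp} combined with the equivalence between the continuous-time sampled-data trajectory and its step-invariant discretization.
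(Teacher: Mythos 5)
Your proof is correct and takes essentially the same route as the paper's: with $\tilde{x}_{o}\equiv 0$ the recursion \eqref{souludecom2} collapses to $\tilde{x}_{u}(t_{k+1})=\Lambda(h_k)\tilde{x}_{u}(t_k)$, and the conclusion follows from Lemma \ref{lemmaexp}. You merely make explicit two details the paper leaves implicit, namely that vanishing of $\tilde{x}_{o}$ at the sampling instants forces $\tilde{x}_{o}(t)=0$ for all $t$ via \eqref{compactformb} (so the integral term in $g(t_k)$ also vanishes), and that $\Lambda(h_k)$ is exactly the step-invariant discretization of the sampled-data system \eqref{compactformii} treated in Lemma \ref{lemmaexp}.
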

\begin{proof} As $\tilde{x}_{o}(t)=0$ for all $t\geq0$, the system in \eqref{souludecom2} reduces to
\begin{equation}\label{sampleddata}
{\tilde{x}}_{u}(t_{k+1})=\Lambda(h_k)\tilde{x}_{u}(t_k).
\end{equation}
By Lemma \ref{lemmaexp}, under Assumptions \ref{ass0} and \ref{ass1}, for all $\gamma \geq {\gamma}_{\max}$ and $\tau\in (0, \tau_{0})$, the system in \eqref{compactform} is exponentially stable at the origin for all $ h_k\in (0, \tau]$, $k\in \mathds{N}$, provided that $\tilde{x}_{o}(t)=\textbf{0}$ for all $t\geq 0$.
Hence, the system in \eqref{sampleddata} is exponentially stable for all $\gamma \geq \gamma_{\max}$ and $ h_k\in (0, \tau]$ over $k\in \mathds{N}$.
\end{proof}

%In order to find $L_{oi}$, we now introduce the following assumption on the sampling frequency which can be found in \cite{chen2012optimal}:
%\begin{ass}\label{ass5}The sampling frequency is non-pathological for $(A_{oi},C_{oi})$, for $i\in \mathcal{V}$.
%\end{ass}%\begin{align}\label{schurmatrix}e^{A_{oi}h}+\int_{0}^{h}e^{A_{oi}\tau}L_{oi} C_{oi}d\tau,\; i\in \mathcal{V},\end{align}
%\begin{rem}\label{remconverge} By Lemma \ref{discrettimeexo}, for the system in \eqref{compactformb}. Suppose Assumption \ref{ass1} holds. Choose $\kappa$ and $\gamma_0^*$ according to \eqref{PQALC} and \eqref{Firstsampling}. Then, for all $\gamma_0 \geq \gamma_0^*$, and $T\in (0, \tau_{1}(\gamma_0, \kappa))$, the system in \eqref{compactformb} is exponentially stable at the origin for all $ h_k\in (0, T]$, $k\in \mathds{N}$. Then, from equation \eqref{discrete}, we have $$\lim_{t_k\rightarrow \infty}\tilde{x}_{oi}(t_k)=0$$ exponentially.
%In addition, $\forall t\in [t_{k},t_{k+1})$, the solution of the system in \eqref{decom2a} is list as follows:
%\begin{align}\label{solution}
%{\tilde{x}}_{oi}(t)=&\Big[e^{A_{oi}(t-t_k)}+\int_{t_k}^{t}e^{A_{oi}(t-\tau)}d\tau L_{oi}C_{oi}\Big]\tilde{x}_{oi}(t_k).
%\end{align}
%Further from \eqref{solution}, $\forall t\in [t_{k},t_{k+1})$, we have
%\begin{align}\label{bound}
%\|{\tilde{x}}_{oi}(t)\|\leq\big[1+T\|L_{oi}C_{oi} \| \big]e^{\|A_{oi}\|T}\|\tilde{x}_{oi}(t_k)\|.
%\end{align}
%Thus, from \eqref{bound}, we will have $$\lim_{t\rightarrow \infty}\tilde{x}_{oi}(t)=0,\;i\in \mathcal{V},$$
%exponentially.
%\end{rem}
\begin{thm}\label{discrettimeexo}  Consider system \eqref{compactform}. Let Assumptions \ref{ass0} and \ref{ass1} be satisfied. Let ${\gamma}_{\max}$ and $h_{\max}$ be chosen according to \eqref{barhbargamm} and Algorithm \ref{algoth}. Then, for all $\gamma \geq  {\gamma}_{\max}$ and $\tau\in (0, h_{\max})$, the origin is an exponentially stable equilibrium of system \eqref{compactform} for all $ h_k\in (0, \tau]$ over $k\in \mathds{N}$ such that
$$ \lim_{t\rightarrow\infty}(\hat{x}_i(t)-x(t))=\textbf{0},\;i\in \mathcal{V},$$ for any $x(0)\in \mathds{R}^{n}$ and $\hat{x}_i(0)\in \mathds{R}^{n}$.
\end{thm}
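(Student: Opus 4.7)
The plan is to exploit the cascade structure that the orthogonal Kalman decomposition has already revealed: in system \eqref{compactform}, the observable part $\tilde{x}_o$ evolves autonomously according to \eqref{compactformb}, while the unobservable part $\tilde{x}_u$ evolves according to \eqref{compactforma}, which is exactly the sampled-data system of Lemma \ref{lemmaexp} forced by a signal that depends linearly on $\tilde{x}_o(t)$ and $\tilde{x}_o(t_k)$. Thus the proof should reduce to (i) showing exponential decay of $\tilde{x}_o$, (ii) showing exponential stability of the unforced $\tilde{x}_u$ dynamics, and (iii) a cascade/ISS argument that lifts (i) and (ii) to exponential stability of the joint system.

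First, I would invoke Lemma \ref{discrettimeexo} (the hybrid-system argument for \eqref{compactformb}): once $\chi\geq\chi_{\max}$ and the maximum sampling interval is below $\tau_1(\chi,\kappa)$ from \eqref{Firstsampling}, there exist constants $M_o,\alpha_o>0$ such that $\|\tilde{x}_o(t)\|\leq M_o e^{-\alpha_o t}\|\tilde{x}_o(0)\|$ for any $h_k\in(0,\tau]\subset(0,\tau_1]$. Second, I would apply Lemma \ref{lemmaexp} (equivalently Corollary \ref{corr}) to the unforced companion of \eqref{compactforma}, i.e.\ to $\dot{z}=A_u z-\gamma V_u^T(\mathcal{L}\otimes I_n)V_u z(t_k)$: for $\gamma\geq\gamma_{\max}$ and $h_k\in(0,\tau_0]$, this is exponentially stable with some rate $\alpha_u>0$. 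Choosing $h_{\max}=\min\{\tau_0,\tau_1\}$ as prescribed by Algorithm \ref{algoth} makes both conclusions hold simultaneously.

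Next, I would use the discrete-time representation \eqref{souludecomlamda} to formalize the cascade. At sampling instants, $\tilde{x}_u(t_{k+1})=\Lambda(h_k)\tilde{x}_u(t_k)+g(t_k)$, and the homogeneous recursion driven by $\Lambda(h_k)$ is exponentially stable (Corollary \ref{corr}). The forcing $g(t_k)$ is linear in $\tilde{x}_o(\cdot)$ on $[t_k,t_{k+1}]$, and because $A_u,A_r,V_u,V_o,\mathcal{L}$ are fixed and $h_k\leq h_{\max}$, one obtains a uniform bound of the form $\|g(t_k)\|\leq \mu\, \sup_{s\in[t_k,t_{k+1}]}\|\tilde{x}_o(s)\|\leq \mu' e^{-\alpha_o t_k}$. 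Iterating the variation-of-constants formula for $\tilde{x}_u(t_k)$ and using the exponential bound on the transition product of $\Lambda(h_k)$ gives $\|\tilde{x}_u(t_k)\|\leq c_1 e^{-\min(\alpha_u,\alpha_o)t_k}+c_2 k\, e^{-\min(\alpha_u,\alpha_o)t_k}$ (absorbing the polynomial factor into a slightly smaller exponent). The intersample behavior is then bounded by standard continuous-time propagation of the bounded ODE on each interval of length at most $h_{\max}$, which preserves the exponential decay up to a constant factor.

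Finally, since $V_i=\row(V_{ui},V_{oi})$ is orthogonal, $\tilde{x}_i(t)=V_{ui}\tilde{x}_{ui}(t)+V_{oi}\tilde{x}_{oi}(t)$, so exponential decay of $(\tilde{x}_u,\tilde{x}_o)$ transfers to exponential decay of each $\tilde{x}_i=\hat{x}_i-x$, yielding the stated limit. The main obstacle I expect is the cascade step: establishing a clean exponential bound for the sampled-data recursion $\tilde{x}_u(t_{k+1})=\Lambda(h_k)\tilde{x}_u(t_k)+g(t_k)$ with time-varying $h_k$ and a disturbance that is itself only shown to decay through a hybrid Lyapunov certificate; care is needed to ensure the transition-matrix product $\prod_{j}\Lambda(h_j)$ admits a uniform exponential bound over all admissible sampling sequences, which must be extracted from the Lyapunov proof of Lemma \ref{lemmaexp} rather than merely from its statement.
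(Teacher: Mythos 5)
Your skeleton coincides with the paper's: exponential decay of $\tilde{x}_{o}$ via the hybrid-Lyapunov lemma for \eqref{compactformb}, exponential stability of the homogeneous sampled $\tilde{x}_{u}$-dynamics via Lemma~\ref{lemmaexp}/Corollary~\ref{corr}, a cascade argument on the discretized system \eqref{souludecomlamda}, an intersample bound, and recombination through the orthogonal $V_i$. Where you genuinely differ is the cascade step: the paper never touches transition-matrix products; it applies a converse Lyapunov theorem to $\tilde{x}_{u}(t_{k+1})=\Lambda(h_k)\tilde{x}_{u}(t_k)$, derives an ISS-type difference inequality for \eqref{souludecom2} with $\|g(t_k)\|^2$ as input, and invokes the asymptotic-gain property (Lemma 3.8 of Jiang--Wang) to conclude $\tilde{x}_{u}(t_k)\to0$, which works purely in the index $k$ and is insensitive to how the sampling instants are distributed in time. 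Your route instead requires a uniform exponential bound on $\Lambda(h_{k-1})\cdots\Lambda(h_{j+1})$; you correctly flag that this must be extracted from the Lyapunov proof of Lemma~\ref{lemmaexp} (it can be: that proof yields $U(t_{k+1})\le\rho^2 U(t_k)$ with a per-step factor controllable by $e^{-\beta h_k}$ for a uniform $\beta>0$), so that part is repairable bookkeeping rather than a new idea.

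The genuine gap is in your convolution estimate. You bound $\|g(t_j)\|\le\mu' e^{-\alpha_o t_j}$ and then claim $\|\tilde{x}_{u}(t_k)\|\le c_1 e^{-\alpha t_k}+c_2\,k\,e^{-\alpha t_k}$ with the factor $k$ ``absorbed into a slightly smaller exponent.'' That absorption needs $k\le C e^{\varepsilon t_k}$, i.e.\ effectively a positive lower bound on the sampling intervals, but the theorem only assumes $h_k\in(0,\tau]$ with $t_k\to\infty$. For the admissible sequence $h_k=\tau/k$ one has $t_k\approx\tau\ln k$, hence $k\approx e^{t_k/\tau}$, and $k\,e^{-\alpha t_k}$ diverges whenever $\alpha<1/\tau$; more generally, when samples cluster, the discrete convolution $\sum_j\|\Lambda(h_{k-1})\cdots\Lambda(h_{j+1})\|\,\|g(t_j)\|$ is not dominated by a geometric series even for $\alpha_u\neq\alpha_o$, because arbitrarily many sampling instants can lie in a bounded time window. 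The fix is to use the structure of $g$: both terms of $g(t_j)$ are integrals over $[0,h_j]$, so $\|g(t_j)\|\le\mu\, h_j\sup_{s\in[t_j,t_{j+1}]}\|\tilde{x}_{o}(s)\|\le\mu''h_j e^{-\alpha_o t_j}$, and the extra factor $h_j$ turns the sum into a Riemann-sum upper bound of $\int_0^{t_k}e^{-\alpha_u(t_k-s)}e^{-\alpha_o s}\,ds\le(1+t_k)\,e^{-\min(\alpha_u,\alpha_o)t_k}$, after which absorbing $(1+t_k)$ into a smaller exponent in $t_k$ is legitimate. With that correction (or by switching to the paper's converse-Lyapunov/ISS argument), your proof goes through; the intersample and recombination steps are the same as the paper's and are fine.
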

\begin{proof} Under Assumptions \ref{ass0} and \ref{ass1}, choose $ \gamma_{\max}$ and $h_{\max}$ according to \eqref{barhbargamm} and Algorithm \ref{algoth}. Then, for all $\gamma \geq  \gamma_{\max}$ and $\tau\in (0, h_{\max})$, the system in \eqref{sampleddata} is exponentially stable at the origin from Corollary \ref{corr}. By the \emph{Converse Lyapunov Theorem} (\citep*{stein1952some}, \citep*[Theorem~23.3]{rugh1996linear} and \citep*{bai1988averaging}), there exists a time-varying
symmetric matrix $P(k)$ over $k\in \mathds{N}$
 such that
\begin{subequations}
\begin{align}
\alpha_1I\leq P(k)\leq&\alpha_2I,\\
\Lambda^T(h_k)P({k+1})\Lambda(h_k)- P(k)\leq& -\alpha_3 I,%\\
%\left\|\frac{\partial V(\tilde{x}_{u}(t_k)) }{\partial \tilde{x}_{u}}\right\|\leq &\alpha_4\|\tilde{x}_{u}\|
\end{align}
\end{subequations}
for some positive constants $\alpha_1$, $\alpha_2$ and $\alpha_3$. Choose the Lyapunov function for the system in \eqref{sampleddata} as follows: $$U(t_k)=\tilde{x}_{u}^T(t_k)P(k)\tilde{x}_{u}(t_k).$$  Then, along the trajectory of the system in \eqref{souludecom2}, we have
\begin{align}\label{differVdefi}
 U(&t_{k+1})-U(t_{k})\nonumber\\
=&\big[\Lambda(h_k)\tilde{x}_{u}(t_k)+g(t_k)\big]^TP({k+1})\big[\Lambda(h_k)\tilde{x}_{u}(t_k)+g(t_k)\big]\nonumber\\
&-\tilde{x}_{u}^T(t_k)P(k)\tilde{x}_{u}(t_k)\nonumber\\
%=&\tilde{x}_{u}^T(t_k)\Lambda^T(h_k)P(h_{k+1})\Lambda(h_k)\tilde{x}_{u}(t_k)\nonumber\\
%&+2g^T(t_k)P(h_{k+1})\Lambda(h_k)\tilde{x}_{u}(t_k)+g^T(t_k)P(h_{k+1})g(t_k)\nonumber\\
%&-\tilde{x}_{u}^T(t_k)P(h_k)\tilde{x}_{u}(t_k)\nonumber\\
\leq& -\alpha_3\|\tilde{x}_{u}(t_k)\|^2+\alpha_2\|g(t_k)\|^2+2\alpha_2\|g(t_k)\|\|\tilde{x}_{u}(t_k)\|\nonumber\\
\leq&-\frac{3\alpha_3}{4}\|\tilde{x}_{u}(t_k)\|^2+\frac{\alpha_2^2+4\alpha_2\alpha_3}{4\alpha_3}\|g(t_k)\|^2.
\end{align}
%where
%\begin{align}\label{kappa1}
%\tilde{x}_{oM}(t_{k})=&\max\nolimits_{\tau\in[t_k,t_{k+1})}\|\tilde{x}_{o}(\tau)\|,\nonumber\\
%\kappa_1=&Te^{\|A_{u}\|T}\big[\gamma\|V_{u}\|\|\mathcal{L}\| \|V_{o}\|+\|A_{r}\|\big].
%\end{align}
By Lemma \ref{discrettimeexo}, under Assumptions \ref{ass0} and \ref{ass1}, for all $\gamma \geq \gamma_{\max}$ and $\tau\in (0, \tau_{0})$, we have
$\lim_{t\rightarrow \infty}\tilde{x}_{oi}(t)=\textbf{0}$. As a result, the trajectories of the system are such that $\lim_{k\rightarrow \infty}\|g(t_k)\|=0$ exponentially and $\|g(t_k)\|$ is bounded over $\mathds{N}$.
%\|\leq Te^{\|A_{u}\|T}\big[\gamma\|V_{u}\|\|\mathcal{L}\| \|V_{o}\|+\|A_{r}\|\big] \max\limits_{\tau\in[t_k,t_{k+1})}\|\tilde{x}_{o}(\tau)\|
The inequality \eqref{differVdefi} proves that system \eqref{souludecom2} is input-to-state stable with $\frac{\alpha_2^2+4\alpha_2\alpha_3}{4\alpha_3}\|g(t_k)\|^2$ as the input. By Lemma 3.8 in \cite{jiang2001input}, the system in \eqref{souludecom2} has the $\mathcal{K}$ asymptotic gain property. Hence, there exists a class $\mathcal{K}$ function $\beta(\cdot)$ such that, for any initial condition, the solution of \eqref{souludecom2} satisfies
$$\limsup_{k\rightarrow \infty}\|\tilde{x}_{u}(t_k)\|\leq \beta\left(\limsup_{k\rightarrow \infty}\frac{\alpha_2^2+4\alpha_2\alpha_3}{4\alpha_3}\|g(t_k)\|^2\right).$$
Therefore, $\lim_{k\rightarrow \infty}\frac{\alpha_2^2+4\alpha_2\alpha_3}{4\alpha_3}\|g(t_k)\|^2=0$ implies that $\lim_{k\rightarrow \infty}\tilde{x}_{u}(t_k)=\textbf{0}$ exponentially. It follows from the system in \eqref{compactforma} that, $\forall t\in [t_k,t_{k+1})$,
\begin{align}%\label{normbond}
\|{\tilde{x}}_{u}(t)\|&\leq\|{\tilde{x}}_{u}(t_k)\|e^{\|A_{u}\|\tau}\big(1+\gamma  \tau \|\mathcal{L}\|\|V_{u}\|^2\big) \nonumber\\
&+ \tau e^{\|A_{u}\|\tau} \big(\|A_r\|+\gamma \|\mathcal{L}\|\|V_{u}\|\|V_{o}\|\big)\|\tilde{x}_{o}(t_k)\|.\nonumber
\end{align}
The last inequality, along with the fact that  $\lim\limits_{k\rightarrow\infty}\tilde{x}_{u}(t_k)=\textbf{0}$ and $\lim\limits_{t\rightarrow\infty}\tilde{x}_{o}(t)=\textbf{0}$, proves that $\lim\limits_{t\rightarrow\infty}\tilde{x}_{u}(t)=\textbf{0}$. Using the identities $\tilde{x}_{oi}(t)=V_{oi}^T\tilde{x}_i(t)$ and $\tilde{x}_{ui}(t)=V_{ui}^T\tilde{x}_i(t)$, it follows from $\lim\limits_{t\rightarrow\infty}\tilde{x}_{ui}(t)=\textbf{0}$ and $\lim\limits_{t\rightarrow\infty}\tilde{x}_{oi}(t)=\textbf{0}$ that $\lim\limits_{t\rightarrow\infty}\tilde{x}_i(t)=\textbf{0},\;i\in \mathcal{V}$.
\end{proof}
\subsection{Application to Jointly Observable Tracking Problem}\label{mainresults2}
In this sub-section, we apply the distributed observer \eqref{decom1} to solve the \emph{Jointly Observable Tracking Problem} of linear multi-agent systems.
%
%Let ${A}_{di}=e^{AT_i}$ and ${B}_{di}=\int_{0}^{T_i}e^{A\tau}d\tau B$, for $i\in \mathcal{V}$. Then, under Assumption \ref{ass6}, from Theorem 3.2 in \cite{chen2012optimal}, there always exists $K_i$ such that ${A}_{di}+{B}_{di}K_i$ is Schur for any non-pathological sampling period $T_i$, for $i\in \mathcal{V}$.

We now consider the design of control input as follows:
\begin{subequations}\label{control}\begin{align}
u_i(t) =&K_{\xi_i}\xi_i(t)+K_{x_i}\hat{x}_i(t),\label{controla}\\
\dot{\hat{x}}_i(t)=&A\hat{x}_i(t) + L_i(C_i\hat{x}_i(t_{k})-y_i(t_{k}))\nonumber\\
&+\gamma M_i\sum\nolimits_{j\in\mathcal{N}_i}{(\hat{x}_j(t_{k})-\hat{x}_i(t_{k}))},\;i\in \mathcal{V},
\end{align}\end{subequations}
 where $t_0=0$ is the initial time. We choose $\gamma \geq  \gamma_{\max}$ and $h_k\in (0, h_{\max})$ with  $ \gamma_{\max}$ and $h_{\max}$ using \eqref{barhbargamm} and Algorithm \ref{algoth}, $k\in \mathds{N}$, $K_{x_i}=U_i-K_{\xi_i}X_i$ and
$K_{\xi_i}$ is the feedback gain such that ${A}_i+{B}_i K_{\xi_i}$ is Hurwitz.
Then, we have the following theorem.

\begin{thm} Consider systems \eqref{leader} and \eqref{follower} and let Assumptions \ref{ass6} -- \ref{ass0} be satisfied. For any initial conditions $\xi_i(0)\in \mathds{R}^n$, $\hat{x}_i(0)\in \mathds{R}^n$ and $x(0)\in \mathds{R}^n$, Problem \ref{prob1} is solvable by the control law in \eqref{control} with $K_{x_i}=U_i-K_{\xi_i}X_i$ and $K_{\xi_i}$ chosen such that ${A}_i+{B}_i K_{\xi_i}$ is Hurwitz for $i\in \mathcal{V}$.
\end{thm}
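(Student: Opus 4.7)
The plan is to build on Theorem~\ref{discrettimeexo} (distributed-observer convergence) and close the loop by a standard internal-model/regulator argument. First I would invoke Theorem~\ref{discrettimeexo}: with $\gamma \geq \gamma_{\max}$ and $h_k\in(0,h_{\max})$ chosen per Algorithm~\ref{algoth}, the observer errors $\tilde{x}_i(t)=\hat{x}_i(t)-x(t)$ decay exponentially to zero. This handles the estimation layer and reduces Problem~\ref{prob1} to showing that the plant-side error vanishes when driven by the feedforward term $K_{xi}\hat{x}_i$.

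Next, I would introduce the regulation error $\tilde{\xi}_i(t)=\xi_i(t)-X_i x(t)$, where $X_i$ solves the regulator equations of Assumption~\ref{ass6-ex}. Substituting $u_i=K_{\xi_i}\xi_i+K_{xi}\hat{x}_i$ into $\dot{\xi}_i=A_i\xi_i+B_i u_i$, using $X_i A=A_iX_i+B_iU_i$ together with the identity $K_{xi}=U_i-K_{\xi_i}X_i$, a short algebraic simplification yields the cascade
\begin{equation*}
\dot{\tilde{\xi}}_i(t) = (A_i+B_iK_{\xi_i})\tilde{\xi}_i(t) + B_i K_{xi}\,\tilde{x}_i(t).
\end{equation*}
The matrix $A_i+B_iK_{\xi_i}$ is Hurwitz by construction (possible because $(A_i,B_i)$ is controllable, Assumption~\ref{ass6}), so this is a linear time-invariant system driven by an exponentially decaying input $\tilde{x}_i(t)$; classical LTI input-to-state stability immediately gives $\lim_{t\to\infty}\tilde{\xi}_i(t)=0$.

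Finally, I would pass to the tracking output. Using the second regulator equation $Y_i=F_iX_i+D_iU_i$ and the same identity $K_{\xi_i}X_i+K_{xi}=U_i$, a direct computation gives
\begin{equation*}
y_{\xi_i}(t)-Y_i x(t) = (F_i+D_iK_{\xi_i})\tilde{\xi}_i(t) + D_iK_{xi}\,\tilde{x}_i(t),
\end{equation*}
which tends to zero because both $\tilde{\xi}_i(t)\to 0$ and $\tilde{x}_i(t)\to 0$. This proves that the control law \eqref{control} solves Problem~\ref{prob1}.

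I do not anticipate a serious obstacle: the delicate stability analysis has already been absorbed into Theorem~\ref{discrettimeexo}, and what remains is the textbook regulator-equation manipulation together with a cascade-ISS conclusion. The only point worth double-checking is notational consistency in the feedforward gain (the statement writes $K_{xi}=U_i-K_\xi X_i$, which I read as $K_{\xi_i}$), since the cancellation of the $x(t)$ terms in both the $\tilde{\xi}_i$-dynamics and the output error hinges on exactly that identity.
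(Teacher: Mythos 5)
Your proposal is correct and follows essentially the same route as the paper: the change of variables $\tilde{\xi}_i=\xi_i-X_i x$ with the regulator equations of Assumption~\ref{ass6-ex}, the cascade $\dot{\tilde{\xi}}_i=(A_i+B_iK_{\xi_i})\tilde{\xi}_i+B_iK_{xi}\tilde{x}_i$, and convergence via Theorem~\ref{discrettimeexo} plus the Hurwitz property of $A_i+B_iK_{\xi_i}$. Your explicit final step expressing $y_{\xi_i}-Y_i x=(F_i+D_iK_{\xi_i})\tilde{\xi}_i+D_iK_{xi}\tilde{x}_i$ is a welcome addition that the paper leaves implicit through the error output $e_i=F_i\tilde{\xi}_i+D_i\tilde{u}_i$.
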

\begin{proof} Let $\tilde{\xi}_i(t)=\xi_i(t)-X_ix(t)$ and $\tilde{u}_i(t)=u_i(t)-U_ix(t)$, for $i\in \mathcal{V}$. Then, we have
\begin{align}
\dot{\tilde{\xi}}_i(t)=&A_i\xi_i(t)+B_iu_i(t)-X_iAx(t)\notag\\
=& A_i{\tilde{\xi}}_i(t)+B_i \tilde{u}_i(t),\label{followererror}\\
e_i(t)=&F_i{\tilde{\xi}}_i(t)+D_i\tilde{u}_i(t),\\
\tilde{u}_i(t)=& K_{\xi_i}\tilde{\xi}_i(t)+K_{x_i}\tilde{x}_i(t),\;i\in \mathcal{V}.\label{followererror-control}
\end{align}
Upon substitution of the control law \eqref{followererror-control} into \eqref{followererror}, we obtain the following dynamics for the estimation error:
\begin{align}\label{followererror2}
\dot{\tilde{\xi}}_i(t)=&({A}_i+{B}_i K_{\xi_i}){\tilde{\xi}}_i(t)+B_i K_{x_i}\tilde{x}_i(t).
\end{align}
From Theorem \ref{discrettimeexo}, under Assumptions \ref{ass0}, and \ref{ass1}, there exists a positive $h_{\max}>0$ such that for any sampling periods $h_k\in (0, h_{\max})$ over $\mathds{N}$ and sufficiently large $\gamma$, $\tilde{x}_i(t)$ converges to zero exponentially as $t\rightarrow\infty$, for $i\in \mathcal{V}$. Moreover, ${A}_i+{B}_i K_{\xi_i}$ is Hurwitz. Thus, the system in \eqref{followererror2} can be viewed as a stable system with $-B_i K_{x_i}\tilde{x}_i(t)$ as the input, in which this input converges to zero as $t\rightarrow\infty$, for $i\in \mathcal{V}$. Hence, for any initial condition $\tilde{\xi}_i(0)$, $\lim_{t\rightarrow\infty}\tilde{\xi}_i(t)=\textbf{0}$, $i\in \mathcal{V}$.
\end{proof}
\section{Numerical Example}\label{numerexam}
\begin{figure}[htp]
\begin{center}
%post/.style = {->, shorten < = 1pt, semithick}
\begin{tikzpicture}[transform shape]
    \centering%
    \node (0) [circle, draw=red!20, fill=red!60, very thick, minimum size=7mm] {\textbf{0}};
    \node (1) [circle, right=of 0, draw=blue!20, fill=blue!60, very thick, minimum size=7mm] {\textbf{1}};
    \node (2) [circle, left=of 0, draw=blue!20, fill=blue!60, very thick, minimum size=7mm] {\textbf{2}};
    \node (3) [circle, above=of 0, draw=blue!20, fill=blue!60, very thick, minimum size=7mm] {\textbf{3}};
    \node (4) [circle, right=of 3, draw=blue!20, fill=blue!60, very thick, minimum size=7mm] {\textbf{4}};
    \node (5) [circle, left=of 3, draw=blue!20, fill=blue!60, very thick, minimum size=7mm] {\textbf{5}};
     \draw[red,thick,dashed,->] (0)--node[below]{$y_1$}(1);
     \draw[red,thick,dashed,->] (0)--node[below]{$y_2$}(2);
     \draw[red,thick,dashed,->] (0)--node[below]{$y_2$}(2);
      \draw[red,thick,dashed,->] (0)--node[left]{$y_3$}(3);
     \draw[blue,thick,dashed,->, left] (2) edge (5);
     \draw[blue,thick,dashed,->, left] (5) edge (3);
     \draw[blue,thick,dashed,->, left] (3) edge (2);
     \draw[blue,thick,dashed,->, left] (4) edge (1);
     \draw[blue,thick,dashed,->, left] (1) edge (3);
     \draw[blue,thick,dashed,->, left] (3) edge (4);
    % \draw[ very  thick,->, bend left] (1) edge (3);
    % \draw[ very  thick,->, bend left] (2) edge (1);
    % \draw[ very  thick,->, bend left] (2) edge (3);
\end{tikzpicture}
\end{center}
\caption{ Communication topology $\bar{\mathcal{G}}$}
\label{fig1numex}
\end{figure}
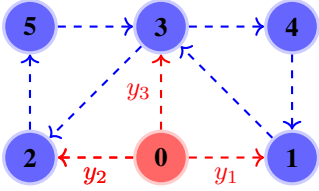
In this example, we consider a linear distributed system composed of an LTI system over the five node dynamics shown in Fig.\ref{fig1numex}. The dynamics of the LTI system \eqref{leader} with 
\begin{align}\nonumber
A=&\left[\begin{matrix}
               0 & 0.1 & 0\\
              -0.1 & 0 & 0\\
              0 & 0 & 0.1
             \end{matrix}
           \right],\nonumber\\ C^T =&\begin{matrix}  & C_1^T & C_2^T & C_3^T & C_4^T& C_5^T\\
\ldelim[{3}{0cm}& 1& 0 & 0& 0& 0 \rdelim]{3}{0cm}\\
&0&1 & 0 & 0 & 0\\
&0&0 & 1 & 0 & 0\\
& &  &   &   &
\end{matrix}.\nonumber 
\end{align}
The partition $y_1(t_k)$, $y_2(t_k)$ and $y_3(t_k)$ of the augmented output $y(t)=\col(y_1(t),y_2(t),y_{3}(t),y_4(t),y_5(t))\equiv Cx(t)$ are measured at time instant $t_k$ by the $1^{st}$, $2^{nd}$ and $3^{rd}$ agents, respectively,  as shown in Fig.\ref{fig1numex}. The system is also such that the $4^{th}$ and $5^{th}$ agents do not receive any direct measurement from the system to be observed. As a result, we set $y_4(t)=0$ and $y_5(t)=0$.
Furthermore, we can also see that, $(A, C)$ is observable, but none of the local pairs $(A, C_i)$ are observable.
The dynamics of the agent in \eqref{follower} with $D_i=0$, $Y_i=I_3$ is described by the matrices:
\begin{align}
A_i=&\left[\begin{matrix}
               0 & 0.1 & 0\\
              -0.1 & 0 & 0\\
              0 & 0 & 0.1
             \end{matrix}
           \right],\;B_i=\left[\begin{matrix}0\\1\\1 \end{matrix}\right], F_i=I_3. \nonumber
\end{align}
Next, we choose the following matrices based on Kalman's observability decomposition:
\begin{align}\label{Vtrueva}V_1=&\left[
      \begin{matrix}
         0& 0&1\\
         0& 1&0\\
        -1&0& 0\\
      \end{matrix}
    \right], \;V_2=\left[
      \begin{matrix}
        0   & -1   &  0\\
     0  &   0 &    1\\
    -1   &  0  &   0\\
      \end{matrix}
    \right],\nonumber\\
 V_3=& \left[
      \begin{matrix}
     -1 &    0 &    0\\
     0  &   1  &   0\\
     0  &   0  &   1\\
      \end{matrix}
    \right],\;V_4=I_3,\;V_5=I_3. \end{align}
It can be verified that the topology in Fig.\ref{fig1numex} satisfies Assumption \ref{ass0}. % with
%\begin{align}L=\left[
%       \begin{matrix}
%         1 & 0 & 0 & -1& 0 \\
%         0 & 2 & -1 & 0& -1 \\
%         -1 & 0 & 2 & 0& -1 \\
%         0 &0 & -1& 1 & 0\\
%          0 &-1& 0& 0 & 1\\
%       \end{matrix}
%     \right].\nonumber
%\end{align}
Let $\theta=\col(1,1,1,1,1)$ such that $\lambda_L= 5$ and $\lambda_l=1$.
Then, we use \eqref{barhbargamm} and \eqref{ALCHINFI} to calculate the required constants $\gamma_{\max}=0.2$, $\chi_{\max}=4.8016$ and $\kappa=4.4721$.

Let $\gamma=0.4$ and $\chi=4.802$.  Algorithm \ref{algoth} and Eq.\eqref{barhbargamm} yield $\tau_{0}=0.0822$, $\tau_{1}(\chi,\kappa)= 0.5721$ and $h_{\max}= 0.0822$.
We design a control law composed of \eqref{compensator} and \eqref{control} with the following parameters: $K_{\xi_i}^T=\col(22.6, 46.7,-49.5)$, $K_{x_i}=-K_{\xi_i}$, $L_{o1}=\col(-4 -2)$, $L_{o2}=\col(-4 -2)$, and $L_{o3}=-2.5$. Then, from \eqref{generalV} and \eqref{Vtrueva}, we have
\begin{align*}
M_1=&\left[
      \begin{matrix}
         0& 0&0\\
         0& 0&0\\
        0&0& 1\\
      \end{matrix}
    \right], \;M_2=\left[
      \begin{matrix}
        0   & 0   &  0\\
     0  &   0 &    0\\
    0   &  0  &   1\\
      \end{matrix}
    \right],\; M_3= \left[
      \begin{matrix}
     1 &    0 &    0\\
     0  &   1  &   0\\
     0  &   0  &   0\\
      \end{matrix}
    \right],\\
M_4=&I_3,\;M_5=I_3,\;L_1=\col(-2,-4,0),\;L_2=\col(4,2,0),\\
L_3=&\col(0,0,-2.5),\;L_4=\col(0,0,0),\;L_5=\col(0,0,0).
\end{align*}
A simulation is carried out with the following initial condition:
$x(0)=\col(1,2,3)$, $\hat{x}_i(0)=\col(0,0,0)$ and $\xi_i(0)=\col(0,0,0)$, for $i\in \mathcal{V}$.

\begin{figure}[ht]
  \centering\setlength{\unitlength}{0.75mm}
\epsfig{figure=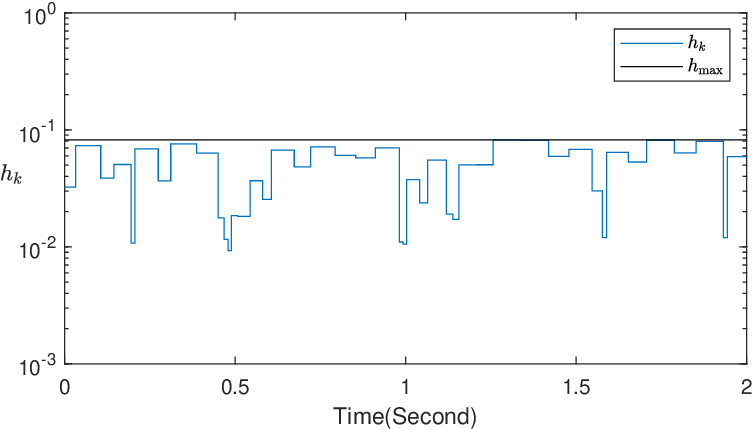,width=3.4in}
\caption{Sampling intervals as a function of the sampling time.}\label{figesmov2}
\end{figure}

\begin{figure}[ht]
  \centering\setlength{\unitlength}{0.75mm}
  % Requires \usepackage{graphicx}
  \epsfig{figure=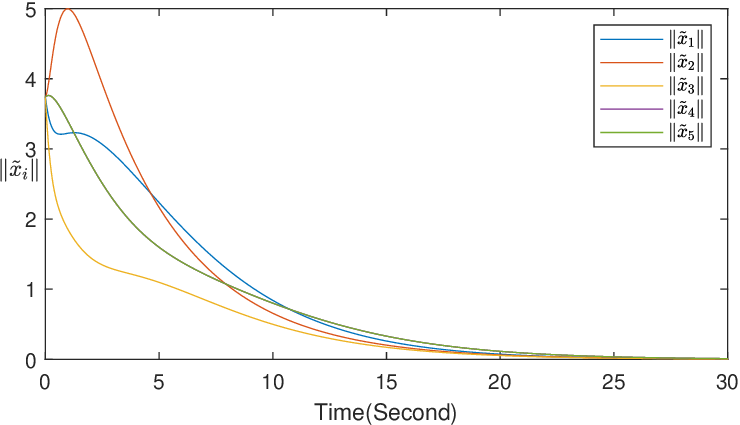,width=3.4in}
  % \caption{Tracking Error}\label{fig2i}
  \caption{Estimation errors of all agents, $i=1,\cdots,5$.}\label{figeses}
\end{figure}

\begin{figure}[ht]
  \centering\setlength{\unitlength}{0.75mm}
  % Requires \usepackage{graphicx}
  \epsfig{figure=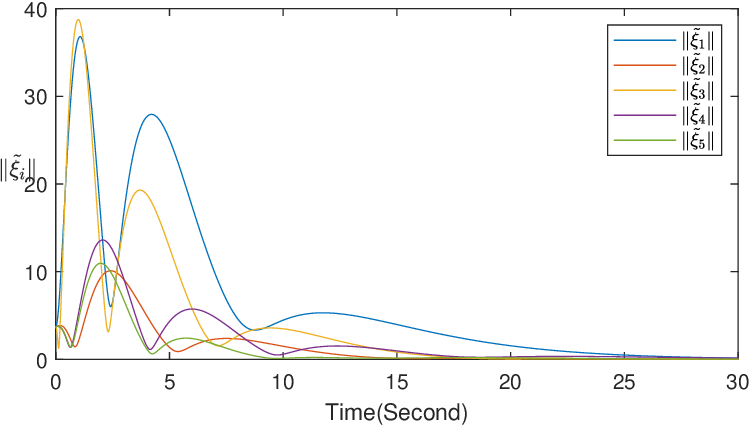,width=3.4in}
  % \caption{Tracking Error}\label{fig2i}
  \caption{Tracking errors of all node dynamics, $i=1,\cdots,5$.}\label{figesmov}
\end{figure}
The time-varying sampling periods as a function of sampling time are shown in Fig.\ref{figesmov2}.
In Fig.\ref{figeses}, the estimation error trajectories of all agents are given.
Finally, Fig.\ref{figesmov} shows the tracking error trajectories of all agents. The results confirm that all local estimation errors converge to zero, as expected.

\section{Conclusions}\label{conlu}
In this paper, a distributed state estimation problem
subject to a joint observability assumption has been investigated for sampled-data systems. An estimated allowable sampling bound for all agents is given to guarantee the convergence of the estimation error as long as the sampling periods of all agents are smaller than this upper bound. A distributed control law based on the distributed observer was synthesized to solve a cooperative tracking problem. The result relaxes the observability assumption required in \cite{su2011cooperative} and \cite{ding2013network} to a joint observability assumption. Although no agents can measure the entire output of the LTI system to be observed, each agent can asymptotically estimate the state of the system using only its aperiodic sampled measurements and its neighbors' estimation even when local measurements do not satisfy the Kalman observability condition. 

In contrast to the sampled-data approach based on the time-triggered strategy for sampling, the event-triggered strategy samples the continuous-time signals according to a prescribed design, or, adaptive triggering conditions, generating irregular observations and control updates. 
As a result, event-triggered techniques such as \cite{liu2018cooperative,zhang2023event,zhang2021hybrid} require further consideration as a mechanism to reduce unnecessary consumption of resources.

\begin{appendices}
\section{}\label{appendixa}
\begin{proof}
Define the following Lyapunov function for system \eqref{compactformii}
\begin{align}\label{lyapu}
U(z(t))=\sum\nolimits_{i=1}^{N}\theta_i\|z_{i}(t)\|^2
\end{align}
where $\theta_i$ is given in Algorithm \ref{algoth}. Then, $U(z(t))$ satisfies the following property
\begin{equation}\label{temp}
\theta_m\|z(t)\|^2\leq U(z(t))\leq\theta_M\|z(t)\|^2.
\end{equation}

For any $t\in [t_{k},t_{k+1})$, the time derivative of $U(t)$ along the trajectories of system (\ref{compactformii}) is given by:
\begin{align}
\dot{U}(t)%=&2\sum_{i=1}^{N}\theta_iz_{i}^T(t)A_{ui}z_{i}(t)\nonumber\\
%&-\gamma z^T(t)\big[V_{u}^T((\Theta \mathcal{L}+\mathcal{L}^T\Theta)\otimes I_n)V_{u}\big]z(t_k)\nonumber\\
=&2\sum_{i=1}^{N}\theta_iz_{i}^T(t)A_{ui}z_{i}(t)-\gamma z^T(t)[V_{u}^T(\mathcal{\hat{L}}\otimes I_n)V_{u}]z(t)\nonumber\\
&-\gamma z^T(t)\big[V_{u}^T(\mathcal{\hat{L}}\otimes I_n)V_{u}\big][z(t_k)-z(t)]. 
\end{align}
Under Assumptions \ref{ass0} and \ref{ass1}, it follows from Lemma \ref{kimPD} that the matrix $V_{u}^T(\mathcal{\hat{L}}\otimes I_n)V_{u}$ is positive definite. Then, for any $t\in [t_{k},t_{k+1})$, we have
\begin{align}\label{dotV}
\dot{U}(t)\leq& 2\theta_M\sup_{i\in \mathcal{V}}\|A_{ui}\|\|z(t)\|^2-\gamma \lambda_{l}\|z(t)\|^2\nonumber\\
&+\gamma \lambda_{L}\|z(t)\| \|z(t)-z(t_k)\|.
\end{align}
From \eqref{compactformii} and \eqref{temp}, we have
\begin{align}\label{dotxnormal}
\|\dot{z}(t)&\|\leq\sup_{i\in \mathcal{V}}\|A_{ui}\|\|z(t)\|+\gamma\sup_{i\in \mathcal{V}}\|V_{ui}\|^2\|\mathcal{{L}}\|\|z(t_k)\|\nonumber\\
\leq&\frac{\sup\nolimits_{i\in \mathcal{V}}[\|A_{ui}\|+\gamma\|V_{ui}\|^2\|\mathcal{{L}}\|]}{\sqrt{\theta_m}}\sqrt{U_{M}(t_k)},
\end{align}
 for any $t\in [t_{k},t_{k+1})$ where $U_{M}(t_k)=\max\limits_{s\in[t_k,t_{k+1})}U(z(s))$. It is noted that, for any $t\in [t_{k},t_{k+1})$,
\begin{align}\label{xnormal}
\|&z(t)-z(t_k)\|\leq \int_{t_k}^{t}\|\dot{z}(s)\|ds\nonumber\\
\leq& \frac{\sup\nolimits_{i\in \mathcal{V}}[\|A_{ui}\|+\gamma\|V_{ui}\|^2\|\mathcal{{L}}\|]}{\sqrt{\theta_m}}\sqrt{U_{M}(t_k)}(t-t_k).
\end{align}
From \eqref{temp}, \eqref{dotV}, \eqref{dotxnormal} and \eqref{xnormal}, we obtain:
\begin{align}\label{dotV2}
\dot{U}(t)
\leq&  -c_1U(t)+ c_2 (t-t_k)\sqrt{U(t)}\sqrt{U_{M}(t_k)}\\
\leq&  -c_1U(t)+ c_2h_k  \sqrt{U(t)}\sqrt{U_{M}(t_k)},\;t\in [t_{k},t_{k+1}), \nonumber
\end{align}
where $c_1$ and $c_2$ are given in \eqref{C1C2}.

We then follow the arguments in \cite{liu2021sampled}. We first show that $U(t)=0$ for all $t\in[t_{k},t_{k+1})$ when $U(t_k)=0$ for some $k\in \mathds{N}$. If this is not true, then there exists a $t^*\in[t_{k},t_{k+1})$ such that $U(t^*)$ such that 
$\dot{U}(t^*)>0$, ${U}(t^*)\geq U(t)$ for all $t\in[t_{k}, t^{*})$. Hence $U_{M}(t_k)=W(t^*)$. Then, from \eqref{dotV2}, we have
$$\dot{U}(t^{*})\leq -(c_1- c_2\tau)U(t^{*})< 0$$
which contradicts the statement that $\dot{U}(t^*)>0$. 
Hence, $U(t)=0$ for all $t\in[t_{k},t_{k+1})$ when $U(t_k)=0$ for some $k\in \mathds{N}$.
%This implies $ $

We then show that when $U(t_k)>0$ for all $k\in \mathds{N}$ the following equation is satisfied
\begin{align}\label{sampleV}
\max\limits_{s\in[t_k,t_{k+1})}U(s)=U(t_k), ~~\forall~h_k\in(0, \tau],~~\tau<\tau_{0}.
\end{align}
If Eq.\eqref{sampleV} is not true, we can assume that there exists a time instant $t^{'}\in[t_k,t_{k+1}) $ such that $U(t^{'})> U(t_k)$. For any $\gamma> \gamma_{\max}$, it is noted from \eqref{dotV2} and \eqref{barhbargamm} that
\begin{align}\label{sampleV2}
\dot{U}(t_k)\leq& -c_1U(t_k)<0,~~\forall  z(t_k)\neq 0.
\end{align}
Thus $U(t)$ will decrease near the time instant $t_k$. Hence, there exists a time instant $t^{''}\in [t_k,t^{'}]$ such that
\begin{align}\label{relation}
(a)&~U(t^{''})=U(t_k),\nonumber\\
(b)&~\dot{U}(t^{''})>0,\nonumber\\
(c)&~U(t)\leq U(t^{''}),~~\forall t\in [t_k,t^{''}].
\end{align}
Then, equations \eqref{dotV2} and \eqref{relation} imply that 
\begin{align}\label{sampleV3}
\dot{U}(t^{''})\leq&-c_1U(t^{''})+ c_2 (t^{''}-t_k)\sqrt{U(t^{''})}\sqrt{U_{M}(t_k)}\nonumber\\
%\leq&-c_1V(t^{''})+ c_2 h_k \sqrt{V(t^{''})}\sqrt{V_{\max}(t_k)}\nonumber\\
\leq& \big[\tau c_2-c_1\big]U(t^{''}).\nonumber
\end{align}
The fact that $\tau< \tau_{0}$ which leads to $\dot{U}(t^{''})\leq 0$, yields a contradiction of the second inequality in \eqref{relation}. Thus, equation \eqref{sampleV} must hold. From \eqref{dotV2}, we can write the following inequality:
\begin{equation}\label{dotV4}
\dot{U}(t)\leq -c_1U(t)+ c_2\tau\sqrt{U(t)}\sqrt{U(t_k)},\;\forall t\in [t_{k},t_{k+1}).
\end{equation}
Motivated by \cite{qian2012global}, let $\eta(t)=\sqrt{U(t)/U(t_k)}$. The time derivative of $\eta(t)$ on the time interval $[t_{k},t_{k+1})$ meets the following differential inequality:
\begin{align}\label{dotV5}
\dot{\eta}(t)
\leq&  -\frac{c_1}{2}\eta(t)+\frac{ c_2\tau}{2},\;\forall t\in [t_{k},t_{k+1}).
\end{align}
Using the comparison lemma \cite{khalil2002nonlinear}, we obtain from Eq.\eqref{dotV5} that:
\begin{align}
\eta(t)
\leq & e^{-\frac{c_1}{2}(t-t_k)}\Big(1-\frac{c_2\tau}{c_1}\Big)+\frac{c_2\tau}{c_1}~~~\forall t\in [t_{k},t_{k+1}).\nonumber
\end{align}
It is noted that $\eta(t_k)=1$ and $c_1-c_2\tau< 0$. As a result, we conclude that:
\begin{align}\label{rholess1}
\lim_{t\rightarrow t_{k+1}^{+}}\eta(t)=&\eta(t_{k+1}^{+})\leq  e^{-\frac{c_1}{2}(t_{k+1}-t_k)}\Big(1-\frac{c_2\tau}{c_1}\Big)+\frac{c_2\tau}{c_1}\nonumber\\
= & e^{-\frac{c_1}{2}h_k}\Big(1-\frac{c_2\tau}{c_1}\Big)+\frac{c_2\tau}{c_1}\triangleq \rho.
\end{align}
In addition, since $U(t)$ is continuous for all $t\geq0$,
 \begin{align*}
\lim_{t\rightarrow t_{k+1}^{+}}\eta(t)
=&\sqrt{U(t_{k+1})/U(t_k)},
\end{align*}
which together with Eq.\eqref{rholess1} yields
 \begin{align}\label{Utkrho}U(t_{k+1})\leq \rho^2U(t_k).\end{align}
Equation \eqref{rholess1} and $h_k\in (0, \tau_{0})$ lead to $0<\rho<1$. Therefore, we have shown that $U(t_k)$ converges to zero as $k$ tends to infinity (exponentially). This, together with Eq.\eqref{sampleV} and the fact that $U(t)=0$ for all $t\in[t_{k},t_{k+1})$ when $U(t_k)=0$ for some $k\in \mathds{N}$, along with \eqref{Utkrho} and \eqref{sampleV} further implies that $\lim_{t\rightarrow\infty}U(t)=0 $ exponentially. Finally, we conclude from from Eq.\eqref{temp} that system \eqref{compactformii} is exponentially stable at the origin.
\end{proof}
\end{appendices}

\noindent
\bibliographystyle{ieeetr}% plainnat elsarticle-harv
\footnotesize
\bibliography{myref}
%\end{sloppypar}

\begin{IEEEbiography}[{\includegraphics[width=1in,height=1.25in,clip,keepaspectratio]{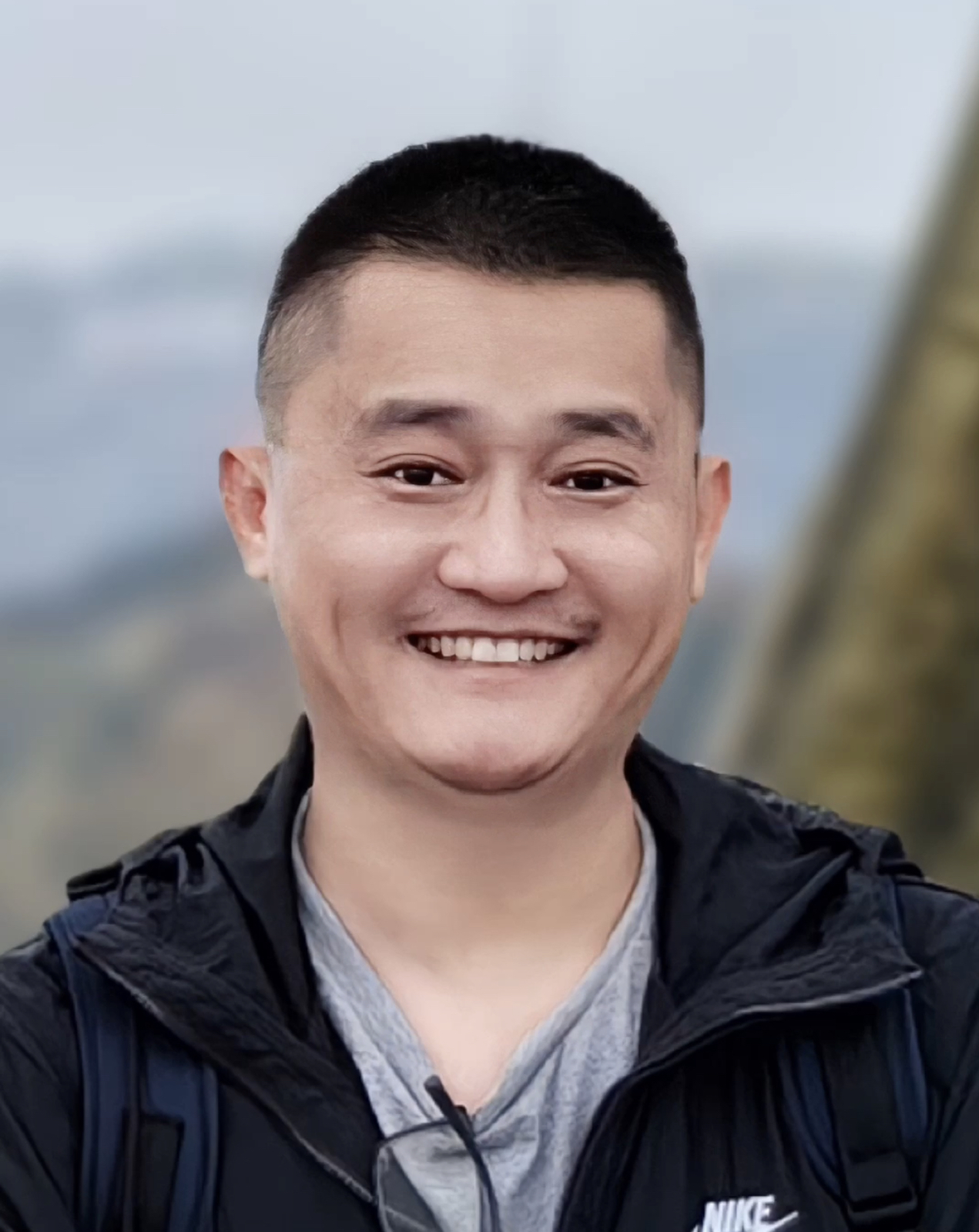}}]{Shimin Wang}
received the B.Sci. degree in Mathematics and Applied Mathematics and M.Eng. degree in Control Science and Control Engineering from Harbin Engineering University, Harbin, China, in 2011 and 2014, respectively. He then received the Ph.D. degree in Mechanical and Automation Engineering from The Chinese University of Hong Kong, Hong Kong, SAR, China, in 2019. 

He was a recipient of the NSERC Postdoctoral Fellowship award in 2022. From 2014 to 2015, he was an assistant engineer at the Jiangsu Automation Research Institute, China State Shipbuilding Corporation Limited. From 2019 to 2023, he held post-doctoral positions in the Department of Electrical and Computer Engineering at the University of Alberta, Canada and the Department of Chemical Engineering at Queen’s University, Canada, respectively. He is now a postdoctoral associate at the Department of Chemical Engineering at Massachusetts Institute of Technology, USA. 
\end{IEEEbiography}

\begin{IEEEbiography}[{\includegraphics[height=1.25in, clip,keepaspectratio]{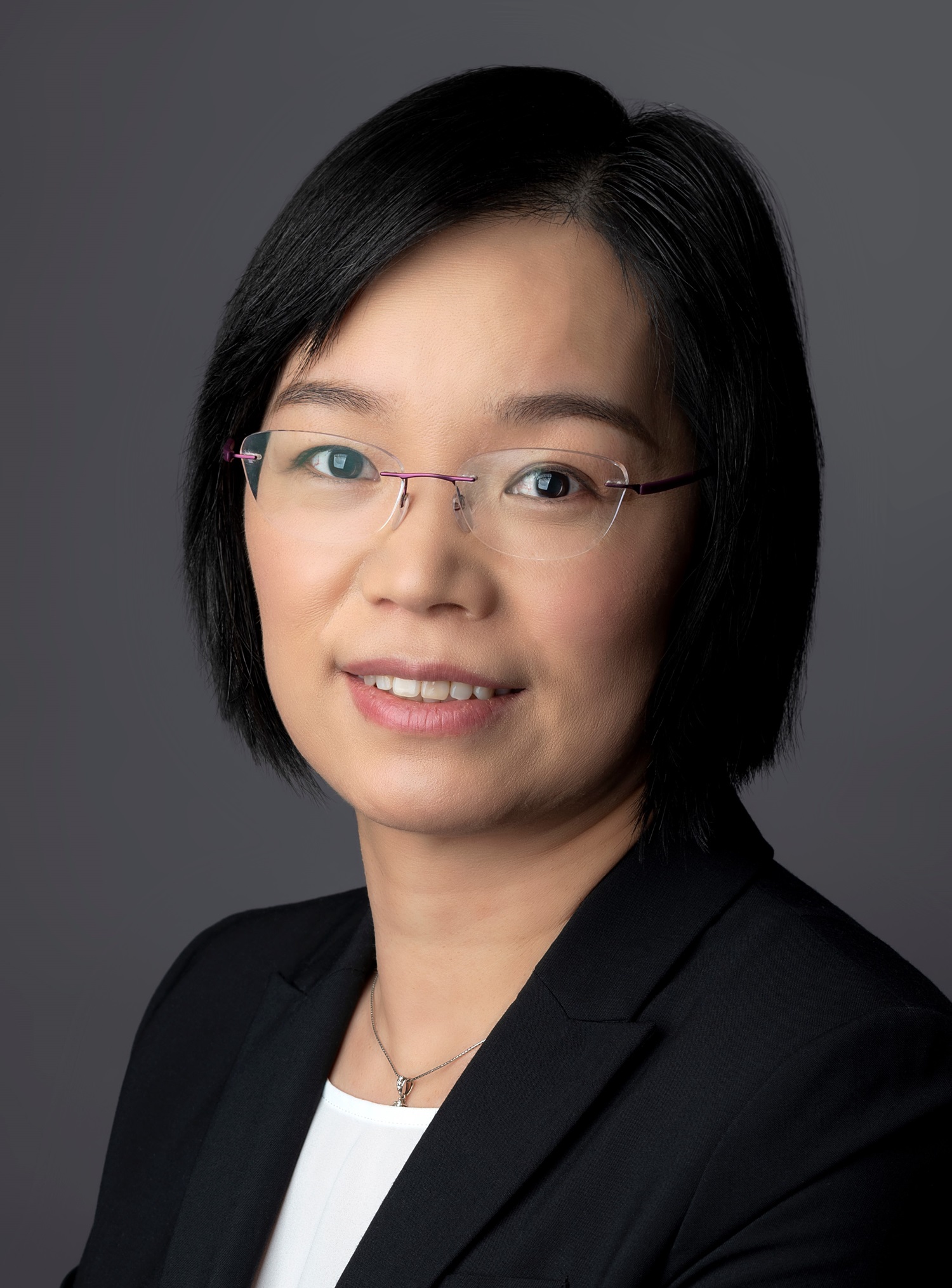}}]{Ya-Jun Pan}is a Professor in the Dept. of Mechanical Engineering at Dalhousie University, Canada. She received the B.E. degree in Mechanical Engineering from Yanshan University (1996), the M.E. degree in Mechanical Engineering from Zhejiang University (1999), and the Ph.D. degree in Electrical and Computer Engineering from the National University of Singapore (2003). She held post-doctoral positions at CNRS in the Laboratoire d’Automatique de Grenoble in France and the Dept. of Electrical and Computer Engineering at the University of Alberta in Canada, respectively. Her research interests are robust nonlinear control, cyber-physical systems, intelligent transportation systems, haptics, and collaborative multiple robotic systems.

She has served as Senior Editor and Technical Editor for IEEE/ASME Trans. on Mechatronics, Associate Editor for IEEE Trans. on Cybernetics, IEEE Trans. on Industrial Informatics, IEEE Industrial Electronics Magazine, and IEEE Trans. on Industrial Electronics. She is a Fellow of the Canadian Academy of Engineering (CAE), a Fellow of the Engineering Institute of Canada (EIC), a Fellow of ASME, a Fellow of CSME, a Senior Member of IEEE, and a registered Professional Engineer in Nova Scotia, Canada.
  \end{IEEEbiography}

\begin{IEEEbiography}[{\includegraphics[height=1.25in, clip,keepaspectratio]{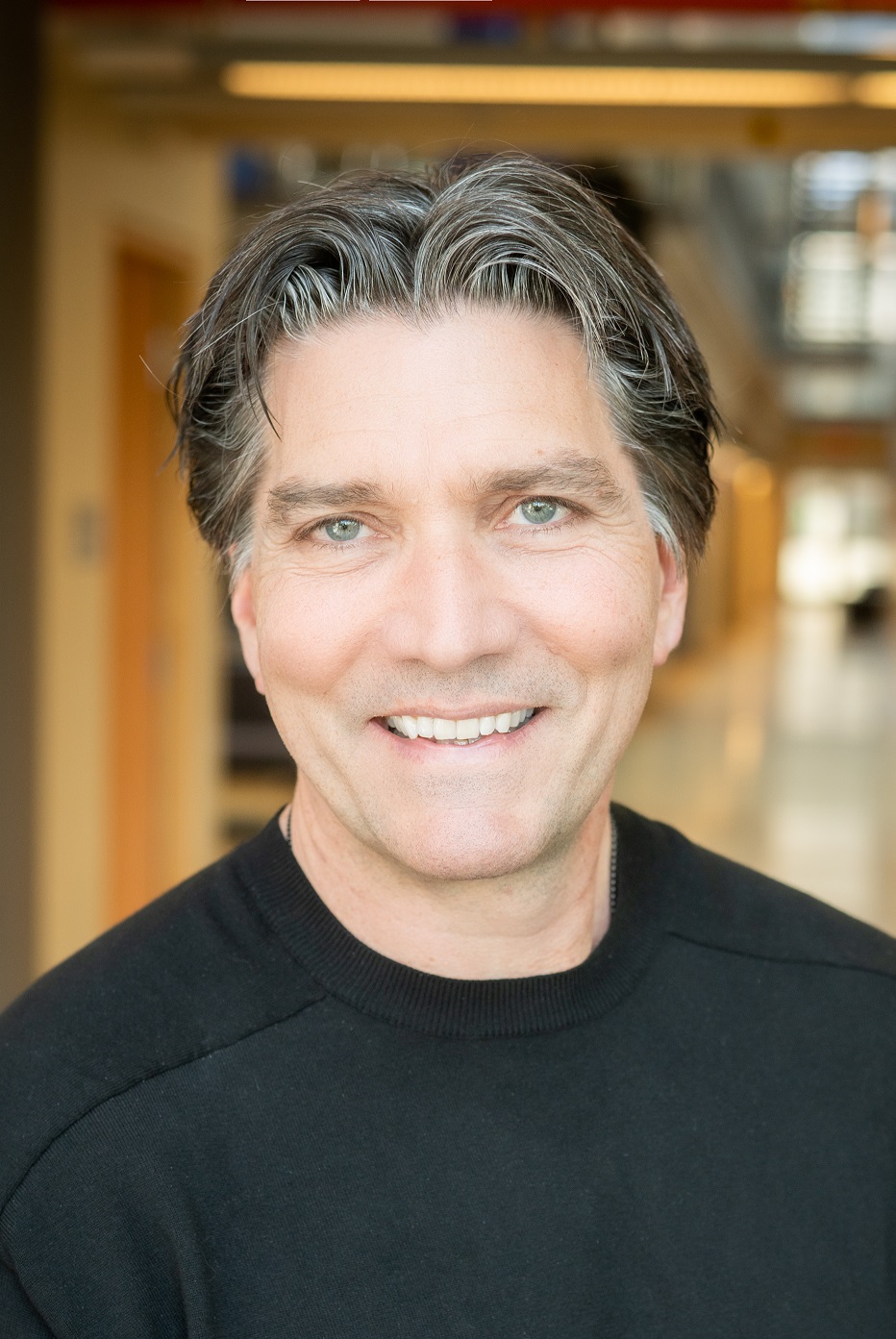}}]{Martin Guay} received the Ph.D. degree from Queen’s University, Kingston, ON, Canada, in 1996. He is currently a Professor in the Department of Chemical Engineering at Queen’s University. His current research interests include nonlinear control systems, especially extremum-seeking control, nonlinear model predictive control, adaptive estimation and control, and geometric control. 

He was a recipient of the Syncrude Innovation Award, the D. G. Fisher from the Canadian Society of Chemical Engineers, and the Premier Research Excellence Award. He is a Senior Editor of IEEE Control Systems Letters. He is the Editor-in-Chief of the Journal of Process Control. He is/was also an Associate Editor for IEEE Transactions on Automatic Control, Automatica, the Canadian Journal of Chemical Engineering, and Nonlinear Analysis: Hybrid Systems.
  \end{IEEEbiography}
  
\end{document}